 \newcommand{\hooklongrightarrow}{\lhook\joinrel\longrightarrow} 
\definecolor{dullmagenta}{RGB}{102,0,102}
\def\col{dullmagenta} 
\renewcommand\paragraph{\@startsection{paragraph}{4}{\z@}%
                                      {\parskip}
                                      {-1em}%
                                      {\normalfont\normalsize\bfseries}}
\def\@endtheorem{\endtrivlist}
\DeclareFontFamily{OMX}{MnSymbolE}{}
\DeclareSymbolFont{MnLargeSymbols}{OMX}{MnSymbolE}{m}{n}
\DeclareFontShape{OMX}{MnSymbolE}{m}{n}{
    <-6>  MnSymbolE5
   <6-7>  MnSymbolE6
   <7-8>  MnSymbolE7
   <8-9>  MnSymbolE8
   <9-10> MnSymbolE9
  <10-12> MnSymbolE10
  <12->   MnSymbolE12
}{}
\DeclareFontShape{OMX}{MnSymbolE}{b}{n}{
    <-6>  MnSymbolE-Bold5
   <6-7>  MnSymbolE-Bold6
   <7-8>  MnSymbolE-Bold7
   <8-9>  MnSymbolE-Bold8
   <9-10> MnSymbolE-Bold9
  <10-12> MnSymbolE-Bold10
  <12->   MnSymbolE-Bold12
}{}
\let\llangle\@undefined
\let\rrangle\@undefined
\DeclareMathDelimiter{\llangle}{\mathopen}%
                     {MnLargeSymbols}{'164}{MnLargeSymbols}{'164}
\DeclareMathDelimiter{\rrangle}{\mathclose}%
                     {MnLargeSymbols}{'171}{MnLargeSymbols}{'171}
\theoremstyle{plain}
\newtheorem{theorem}{Theorem}
\newtheorem{proposition}{Proposition}
\theoremstyle{definition}
\newtheorem{example}{Example}
\newtheorem{definition}{Definition}
\newcommand{\F}{\mathcal{F}}
\newcommand{\B}{\mathcal{B}}
\newcommand{\R}{\mathbb{R}}
\newcommand{\A}{\mathcal{A}}
\DeclareMathOperator{\Img}{Im}
\DeclareMathOperator{\rank}{rank}
\newcommand{\st}{\;\ifnum\currentgrouptype=16 \middle\fi|\;}
\begin{document}
\title{Morse families and Dirac systems}

\author{M.\ Barbero-Li\~n\'an\textsuperscript{\textsection,$\ddagger$}, H.\ Cendra\textsuperscript{$\dagger$}, E.\ Garc\'{\i}a-Tora\~{n}o Andr\'{e}s\textsuperscript{$\dagger$},  and \\ D.\ Mart\'{\i}n de Diego\textsuperscript{$\ddagger$}
\\[2mm]
{\small \textsection\, Departamento de Matem\'atica Aplicada, Universidad Polit\'ecnica de Madrid,} \\
{\small Av. Juan de Herrera 4, 28040 Madrid, Spain}
\\[2mm]
{\small $\dagger$ Departamento de Matem\'atica,
Universidad Nacional del Sur, CONICET,}  \\
{\small  Av.\ Alem 1253, 8000 Bah\'ia Blanca, Argentina}
\\[2mm]
{\small $\ddagger$  Instituto de Ciencias Matem\'aticas (CSIC-UAM-UC3M-UCM),} \\
{\small C/Nicol\'as Cabrera 13-15, 28049 Madrid, Spain}}

\date{}

\maketitle

\begin{abstract}
Dirac structures and Morse families are used to obtain a geometric formalism that unifies most of the scenarios in mechanics (constrained calculus, nonholonomic systems, optimal control theory, higher-order mechanics, etc.), as the examples in the paper show. This approach generalizes the previous results on Dirac structures associated with Lagrangian submanifolds. An integrability algorithm in the sense of Mendela, Marmo and Tulczyjew is described for the generalized Dirac dynamical systems under study to determine the set where the implicit differential equations have solutions.
\end{abstract}

\section{Introduction}\label{sec:Intro}

Dirac structures were introduced in~\cite{CouWei,Courant} as a unified approach to both symplectic and Poisson geometries. One of the motivations behind the definition was the study of constrained systems, including the celebrated case of the  constrained bracket induced by a degenerate Lagrangian (which was first studied by Dirac~\cite{Dirac50,Dirac}, and after whom Dirac structures are named). The infinite-dimensional analog was introduced in~\cite{Dorfman2} in the context of integrable evolution equations. 

Dirac structures were soon employed to describe many situations of interest in mechanics and mathematical physics. In particular the idea of using a Dirac structure $D\subset TM\oplus T^*M$, where $M=T^*Q$ and $D$ is induced by the canonical symplectic structure, and a Hamiltonian function $H\colon M\to\R$ which represents the energy $E=H$ to write an implicit Hamiltonian system of the form
\begin{equation}\label{eq:intro1}
 \dot x\oplus dE(x)\in D_x
\end{equation}
is already found in~\cite{VdSM2,VdSM3}. This has been the cornerstone of the development of a geometric theory of Port-Hamiltonian systems~\cite{DVdS}; we refer the interested reader to~\cite{VdS-Book} for a survey and a comprehensive list of references. Building on the notion of implicit Hamiltonian system, the case of Lagrangian systems (possibly degenerate) was described in \cite{2006YoshiMarsdenI,2006YoshiMarsdenII} using the so-called ``Dirac differential'' $\mathfrak{D} L\colon TQ\to T^*T^*Q$ of the Lagrangian $L\colon TQ\to\R$. In a nutshell, the Dirac differential combines Tulczyjew's diffeomorphisms and the differential of $L$ to define a subset of $T^*T^*Q$. It is then proved that the implicit Lagrangian system
\begin{equation}\label{eq:intro2}
X\oplus \mathfrak{D} L\in D_{\omega_Q},
\end{equation}
where $X\colon TQ\oplus T^*Q\to TT^*Q$ is a \emph{partial vector field} and $D_{\omega_Q}\subset TT^*Q\oplus T^*T^*Q$ is the Dirac structure induced by the canonical symplectic form $\omega_Q$ on $T^*Q$, leads to the equations of the Lagrangian system written in an implicit form. One of the virtues of~\eqref{eq:intro2} is that one might modify the Dirac structure $D_{\omega_Q}$ to account for a nonholonomic distribution $\Delta_Q\subset TQ$, and produce the standard nonholonomic equations (again, the equations are obtained in an implicit way). One can modify this approach to include more general situations such as vakonomic mechanics~\cite{2015JiYo}.

Dirac systems, i.e. system of the form~\eqref{eq:intro1}, also include Lagrangian systems given by $L\colon TQ\to\R$. In this case, one enlarges the phase space and works on the Pontryagin bundle $M=TQ\oplus T^*Q$, endowed with a suitable Dirac structure, and the energy $E\colon M\to \R$ given by $E(q,v,p)=pv-L(q,v)$.  Within this framework, there is no need for neither a Dirac differential operator nor the notion of a partial vector field as in~\cite{2006YoshiMarsdenI,2006YoshiMarsdenII}The reader can take a look at~\cite{CEF_DiracConstraints} for the more elaborated examples of nonholonomic systems and LC-circuits.

Another unified approach to Dirac systems, but based on the more general notion of Dirac algebroids, is found in~\cite{Dirac_Algebroids}. Using a somehow similar approach to that of \cite{2006YoshiMarsdenI}, the authors develop a formalism which includes, among others, mechanics on algebroids, non-autonomous systems and vakonomic and nonholonomic mechanics.

In this paper, we wish to take an alternative point of view and generalize Dirac systems in such a way that dynamics is defined by means of a Lagrangian submanifold of the phase space. We will use Morse families to generate those Lagrangian submanifolds, an approach that has already been applied in the realm of optimal control problems~\cite{2012BaIgleMar}. With this notion of \emph{generalized Dirac system} we are able to recover many examples in the literature in a unified intrinsic formalism, including nonholonomic and vakonomic mechanics, optimal control problems and constrained problems on linear almost Poisson manifolds.

The paper starts with a brief review of the relevant definitions in Dirac geometry (Section~\ref{sec:Background}). In Section~\ref{sec:Morse} the basic construction of Lagrangian submanifolds out of Morse families is explained (we follow closely the exposition in~\cite{LiMarle}), and we define the notion of generalized Dirac system that we will consider in this paper. Theorem~\ref{Thm:main} relates this approach with the standard notion of Dirac system. Section~\ref{sec:Examples} is devoted to a number of examples and in Section~\ref{sec:algorithm} we discuss the application of the integrability algorithm in~\cite{1995MMT} to the notion of generalized Dirac system and discuss the preservation of the presymplectic structure by the dynamics under the assumption of the integrability of the Dirac structure. The paper ends with a section devoted to future work. The final appendix contains details about the relation between the Dirac structures of interest to us and some natural maps in mechanics.

\paragraph{Acknowledgements.} The authors acknowledge the financial support from the Spanish government through the network research grant MTM2015-69124-REDT. MB and DMdD acknowledge financial support from the Spanish Ministry of Economy and Competitiveness, through the   research grants MTM2013-42870-P, MTM2016-76702-P and ``Severo Ochoa Programme for Centers of Excellence'' in R\&D (SEV-2015-0554). HC and EGTA thank the CONICET for financial support. HC and EGTA also acknowledge funding from ANPCyT (BID PICT 2013 1302, Argentina) and from Universidad Nacional del Sur (PGI 24/L098). EGTA  is grateful to the ICMAT (Madrid) for its hospitality during the visits which made this work possible.

\section{Preliminaries on Dirac structures}
\label{sec:Background}

Consider a finite dimensional vector space $U$, and let $U^*$ be its dual. We endow the vector space $U\oplus U^*$ with the following symmetric non-degenerate pairing:
\[
\llangle (u_1,\alpha_1),(u_2,\alpha_2) \rrangle=\langle \alpha_1, u_2\rangle + \langle \alpha_2,u_1\rangle\,,
\]
where $(u_1,\alpha_1),(u_1,\alpha_2)\in U\oplus U^*$. A linear Dirac structure on $U\oplus U^*$ is a Lagrangian subspace $D\subset U\oplus U^*$ or, in other words, $D$ satisfies $D=D^\perp$ where the orthogonal is taken relative to the pairing  $\llangle\cdot,\cdot\rrangle$. It is easy to show that a Dirac structure is characterized by the following two conditions:  $\llangle (u_1,\alpha_1),(u_2,\alpha_2) \rrangle=0$ for all $(u_1,\alpha_1),(u_2,\alpha_2)\in U\oplus U^*$, and ${\rm dim} \;D= {\rm dim}\; U$. 

\begin{definition}
A \emph{Dirac structure} on a manifold $M$ is a subbundle $D\subset TM\oplus T^*M$ such that, for each $m\in M$, $D_m\subset T_mM\oplus T_m^*M$ is a linear Dirac structure.
\end{definition}
We remark that we do not require any integrability condition in this definition (this is the convention in e.g.~\cite{DVdS,2006YoshiMarsdenI}). The reader can find more details about the geometric meaning of the integrability condition on~\cite{Courant}. Two basic examples of Dirac structures are the following:
\begin{enumerate}[(i)]
 \item If $\omega$ is a 2-form on $M$, the musical isomorphism $\omega^\flat\colon TM\to T^*M$ is defnined by $v\mapsto \omega^\flat(v)=\omega(v,\cdot)$. Its graph defines a Dirac structure that we denote $D_\omega$:
 \[
 D_\omega=\{(v,\omega^\flat(v))\st v\in TM\}\subset TM\oplus T^*M.
 \]
 \item Let $\Lambda$ be a Poisson bivector on $M$, and let us denote by $\sharp_\Lambda\colon T^*M\to TM$ the isomorphism given by $\langle\beta,\sharp_\Lambda(\alpha)\rangle=\Lambda(\beta,\alpha)$, for each $\alpha,\beta\in T^*M$. The graph of $\sharp_\Lambda$ defines a Dirac structure:
 \[
D_\Lambda=\{(\sharp_\Lambda(\alpha),\alpha)\st \alpha\in T^*M\}\subset TM\oplus T^*M.
 \]
\end{enumerate}
More general Dirac structures can be obtained restricting forms or bivectors to distributions or codistributions, respectively. We refer to~\cite{DVdS,2006YoshiMarsdenI}) for more details and examples. We remark that, in general, Dirac structures are not given by graphs of forms or bivectors, see e.g.~\cite{Burs}.

Within the framework of Dirac manifolds (manifolds equipped with a Dirac structure), there are operations of ``backward'' and ``forward'' which extend the usual notions of pull-back of a 2-form and push-forward of a bivector (see~\cite{RadkoBursztyn,Burs}). We start with the construction in the case of vector spaces. Let $\varphi\colon U\to V$ be a linear map between the vector spaces $U$ and $V$. The following holds:
\begin{enumerate}[1)]
 \item If $D_V$ is a linear Dirac structure on $V$, then
 \begin{equation}\label{eq:backward_vectorspace}
\B_\varphi(D_V)=\{(u,\varphi^*v^*)\in U\oplus U^* \st u\in U, \; v^*\in V^*, \; (\varphi u,v^*)\in D_V\}
  \end{equation}
is a linear Dirac structure on $U$ that we call \emph{backward of $D_U$ by $\varphi$}.
\item If $D_U$ is a linear Dirac structure on $U$, then
\begin{equation}\label{eq:forward_vectorspace}
\F_\varphi(D_U)=\{(\varphi u,v^*)\in V\oplus V^* \st u\in U, \; v^*\in V^*, \; (u,\varphi^*v^*)\in D_U\}
\end{equation}
is a linear Dirac structure on $V$ that we call \emph{forward of $D_U$ by $\varphi$}.
\end{enumerate}

These operations can be extended pointwise to the case of Dirac manifolds, although one needs some regularity condition to assure that the resulting distributions are smooth (and hence, define Dirac structures). In more detail, the construction of the backward and the forward for Dirac manifolds are as follows. Let $f\colon M\to N$ be a smooth map, then:

\begin{enumerate}[1)]
 \item Let $D_N\subset TN\oplus T^*N$ be a Dirac structure on $N$. For each $m\in M$ the map $T_mf\colon T_mM\to T_{f(m)}N$ can be used to define pointwise the backward image of $(D_N)_{f(m)}$ by $T_mf$ in the sense of~\eqref{eq:backward_vectorspace}. This defines a Lagrangian distribution of $TM\oplus T^*M$. When it is a subbundle (i.e. the distribution is regular), then it becomes a Dirac structure $D_M$ on $M$ that we call the \emph{backward of $D_N$ by $f$}, and denote it by $D_M\equiv\B_f(D_N)$.

\item Let $D_M\subset TM\oplus T^*M$ be a Dirac structure on $M$ which is $f$-invariant, meaning that
\[
\F_{(T_mf)}\big((D_M)_m\big)=\F_{(T_{m'}f)}\big((D_M)_{m'}\big),\qquad \text{whenever } f(m)=f(m'),
\]
with the forward $\F_{(T_mf)}$ as in~\eqref{eq:forward_vectorspace}. Then, similarly to the backward case, one can use the forward construction to define pointwise a Lagrangian distribution of $TN\oplus T^*N$. Whenever this distribution becomes smooth, it defines a Dirac structure on $N$ that we call  the \emph{forward of $D_M$ by $f$}, and denote it by $D_N\equiv\F_f(D_M)$.
\end{enumerate}

Some regularity conditions which guarantee that these constructions define Dirac structures can be found in~\cite{Burs,2012CenRaYo}. The forward and the backward of a Dirac structure have the following functorial property: if $f\colon M\to N$ and $g\colon N\to Q$ are smooth maps, then
\[
\B_{(g\circ f)}=\B_f\circ\B_g,\qquad \F_{(g\circ f)}=\F_g\circ\F_f.
\]

\section{Morse families and generalized Dirac systems}
\label{sec:Morse}

The notion of a Morse family or
phase function was introduced in~\cite{Hormander}.  Here we just give some key definitions and essential results that we will need later.  More details can be found in~\cite{AbMa,1989LeRo,GuiStern,LiMarle,Weinstein} and references therein. In particular, we will follow~\cite{LiMarle}.

\paragraph{Morse families and Lagrangian submanifolds.} A simple but important example of a Lagrangian submanifold of $T^* Q$ is provided by the image $\alpha (Q)$ of a closed 1-form $\alpha$ on $Q$. More general Lagrangian submanifolds of cotangent bundles can be represented as a certain quotient of images of 1-forms, as we will explain next following~\cite{LiMarle}. Recall that if $\pi\colon M \rightarrow N$ is a submersion the conormal bundle is
\[
(\ker\, T \pi)^\circ =\left\{\alpha\in T^*M\; |\; \langle \alpha, v\rangle=0, \hbox{ for all } v\in \ker\, T_{\pi_M(\alpha)}\pi\right\}\subset T^*M.
\]

\noindent Define the vector bundle morphism $\mathfrak{j}_\pi\colon (\ker \, T \pi)^\circ \rightarrow T^*N$  by
\begin{equation}\label{eq:jpi}
 \langle \mathfrak{j}_\pi(\eta), T\pi (v) \rangle=\langle \eta,v\rangle
\end{equation}
for all $\eta\in (\ker \, T \pi)^\circ$ and for all $v\in T_{\pi (\eta)} M$. It is easy to check that this application is well defined and that the following diagram is commutative

\begin{figure}[H]
\centering
\includegraphics{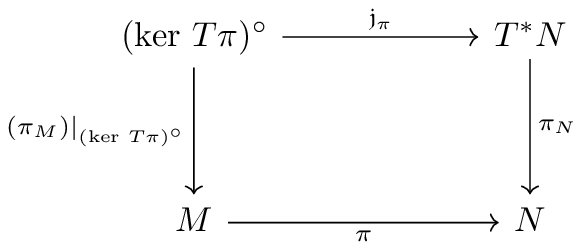}
\end{figure}

\noindent One can also show using the definition of ${\mathfrak j}_{\pi}$~\eqref{eq:jpi} that the restriction of ${\mathfrak j}_{\pi}$  to each fiber is an isomorphism from that fiber to the corresponding fiber of $T^*N$. The map ${\mathfrak j}_{\pi}$
can be locally described as follows.
Let $(q^i, y^a)$ be fibered coordinates  on $M$ and $(q^i)$ on $N$ such that $\pi(q^i, y^a)=(q^i)$ which induce coordinates $(q^i, y^a, p_i, z_a)$ on $T^*M$ and $(q^i, p_i)$ on $T^*N$. The subbundle $(\ker T\pi)^\circ$ is locally described by coordinates $(q^i, y^a, p_i,z_a=0)$, and
\[
\begin{array}{rrcl}
{\mathfrak j}_{\pi}:& (\ker T\pi)^\circ&\longrightarrow& T^*N,\\
                             & (q^i, y^a, p_i,z_a=0)&\longmapsto&(q^i, p_i).
\end{array}
\]
We write $\pi^* \colon T_{\pi(\eta)}^*N \rightarrow T_{\eta}^*M$ for the pullback of $\pi$. Note that
\begin{equation*}
\pi^*\circ {\mathfrak j}_{\pi}={\rm id}_{(\ker T\pi)^\circ}\, .
\end{equation*}

\begin{definition}\label{def:MorseFamily}\normalfont
Let $\pi\colon M \rightarrow N$ be a submersion of a differentiable manifold $M$ onto a differentiable manifold $N$. Let $E\colon M\rightarrow \mathbb{R}$ be a differentiable function. The function $E$ is called a \textit{Morse family over the submersion $\pi$} if the image of the differential of $E$, $ {\rm d}E(M)\subset T^*M$, and the conormal bundle are transverse in $T^*M$.
\end{definition}
\noindent Recall that by definition ${\rm d}E(M)$ and $(\ker\, T\pi)^\circ$ are transverse in $T^*M$, denoted by ${\rm d}E(M)\pitchfork (\ker\, T\pi)^\circ$, if
\begin{equation*}\forall\, \alpha\in (\ker \, T \pi)^\circ \cap  {\rm d}E(M)  \subset T^*M,   \qquad T_\alpha( {\rm d}E(M))+T_\alpha (\ker \, T \pi)^\circ=T_\alpha (T^*M).
\end{equation*}
In coordinates adapted to the fibration, if the submersion $\pi: M\rightarrow N$ is expressed by $\pi(q^i, y^a)=(q^i)$, then the condition for $E: M\rightarrow \mathbb{R}$ to be a Morse family is that the matrix
\begin{equation}\label{eq:Morsematrix}
\left(
\frac{\partial^2 E}{\partial q^i \partial y^b}, \frac{\partial^2 E}{\partial y^a \partial y^b}
\right)
\end{equation}
has maximal rank for all $(q^i, y^a)$ satisfying $\displaystyle{\frac{\partial E}{\partial y^a}=0}$.

Observe that as a consequence of Definition~\ref{def:MorseFamily} the submanifold $(\ker \, T \pi)^\circ \cap  {\rm d}E(M)$ is isotropic in $(T^*M, \omega_M)$, since it is contained in the Lagrangian submanifold ${\rm d}E(M)$. A computation shows that $\dim \left((\ker \, T \pi)^\circ \cap  {\rm d}E(M)\right)=\dim N$ (see~\cite{LiMarle}). We remark that the restriction of the canonical symplectic form $\omega_M$ to $(\ker\, T\pi)^\circ$ is equal to $\pi^*\omega_N$, locally ${\rm d}q^i\wedge {\rm d}p_i$. From here we derive the following useful result (see also~\cite[Appendix 7, Proposition 1.12]{LiMarle} or Chapter 4 in \cite{GuiStern}).

\begin{proposition}\label{prop:img:j:lagrangian}
Let $E\colon M\rightarrow \mathbb{R}$ be a Morse family. The restriction of the morphism $\mathfrak{j}_\pi\colon (\ker \, T \pi)^\circ \rightarrow T^*N$ to the isotropic submanifold $(\ker \, T \pi)^\circ \cap {\rm d}E(M)$ is a Lagrangian immersion of $(\ker \, T \pi)^\circ \cap {\rm d}E(M)$ in $(T^*N, \omega_N)$. This Lagrangian immersion is said to be generated by the Morse family $E$.
\label{Prop:MorseFamily}
\end{proposition}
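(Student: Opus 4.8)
The plan is to verify the three things that make $\mathfrak{j}_\pi$, restricted to $S:=(\ker T\pi)^\circ\cap {\rm d}E(M)$, a Lagrangian immersion into $(T^*N,\omega_N)$: that it is an immersion, that the dimension of its domain is $\dim N$, and that the pulled-back symplectic form vanishes on $S$. The dimension count $\dim S=\dim N$ is already recorded in the text (citing \cite{LiMarle}), so I would simply invoke it. For the immersion property, I would use the observation, also stated just above the proposition, that the restriction of $\mathfrak{j}_\pi$ to each fiber of $(\ker T\pi)^\circ\to M$ is an isomorphism onto the corresponding fiber of $T^*N$, together with the Morse condition; concretely, in the adapted coordinates $(q^i,y^a,p_i,z_a=0)$ the map sends this point to $(q^i,p_i)$, and on $S$ we have the extra relations $p_i=\partial E/\partial q^i$, $z_a=\partial E/\partial y^a=0$, so $S$ is parametrized by $(q^i,y^a)$ subject to $\partial E/\partial y^a=0$. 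Computing the differential of $(q^i,y^a)\mapsto (q^i,\partial E/\partial q^i)$ along the subset cut out by $\partial E/\partial y^a=0$, injectivity of the tangent map reduces precisely to the statement that the matrix in~\eqref{eq:Morsematrix} has maximal rank, i.e.\ to the Morse family hypothesis. This is the step I expect to require the most care, since it is where the transversality condition of Definition~\ref{def:MorseFamily} genuinely enters.

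Next I would show the immersion is isotropic, hence Lagrangian by the dimension count. Here I would use the key remark quoted in the text: the restriction of the canonical symplectic form $\omega_M$ to $(\ker T\pi)^\circ$ equals $\pi^*\omega_N$ (locally ${\rm d}q^i\wedge {\rm d}p_i$), and on the other hand, by definition of $\mathfrak{j}_\pi$ and the relation $\pi^*\circ \mathfrak{j}_\pi={\rm id}$, one has $\mathfrak{j}_\pi^*\omega_N = \omega_M|_{(\ker T\pi)^\circ}$. Since $S\subset {\rm d}E(M)$ and ${\rm d}E(M)$ is Lagrangian in $(T^*M,\omega_M)$ (being the image of a closed, indeed exact, $1$-form), the form $\omega_M$ vanishes identically on $S$; therefore $\bigl(\mathfrak{j}_\pi|_S\bigr)^*\omega_N = \bigl(\omega_M|_{(\ker T\pi)^\circ}\bigr)\big|_S = \omega_M|_S = 0$. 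Thus $\mathfrak{j}_\pi(S)$ is an isotropic immersed submanifold of $(T^*N,\omega_N)$ of dimension $\dim N = \tfrac12\dim T^*N$, hence Lagrangian.

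Finally I would assemble these: $\mathfrak{j}_\pi|_S$ is an injective immersion (locally; globally it is at least an immersion, which is all that is claimed) whose image is isotropic of half the dimension of $T^*N$, which is the definition of a Lagrangian immersion. The only genuinely nontrivial input is the immersion property, where one must translate the coordinate description of $S$ and of $\mathfrak{j}_\pi$ into a rank condition and recognize it as~\eqref{eq:Morsematrix}; everything else follows by combining the structural facts about $\mathfrak{j}_\pi$ ($\pi^*\circ\mathfrak{j}_\pi={\rm id}$, fiberwise isomorphism, $\omega_M|_{(\ker T\pi)^\circ}=\pi^*\omega_N$) already established in the preceding paragraphs with the fact that ${\rm d}E(M)$ is Lagrangian. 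For the dimension statement I would either reproduce the short computation or cite \cite[Appendix 7, Proposition 1.12]{LiMarle} as the text already suggests.
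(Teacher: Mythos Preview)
Your proposal is correct and follows precisely the line of argument the paper indicates: the paper does not actually prove the proposition but merely records, in the paragraph preceding it, the three ingredients you use (isotropy of $S$ in $T^*M$ via ${\rm d}E(M)$, the dimension count $\dim S=\dim N$ from~\cite{LiMarle}, and the identity $\omega_M|_{(\ker T\pi)^\circ}=\mathfrak{j}_\pi^*\omega_N$), then says ``from here we derive the following useful result'' and defers to~\cite{LiMarle,GuiStern}. Your write-up is a faithful fleshing-out of that sketch, including the coordinate verification that the immersion property is exactly the maximal-rank condition~\eqref{eq:Morsematrix}; there is nothing to correct.
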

\noindent We will denote by
\[
S_E={\mathfrak j}_{\pi}\left({\rm d}E(M)\cap (\ker\, T\pi)^\circ\right)
\]
the immersed Lagrangian submanifold in the above proposition. Observe that in general $S_E$ is not horizontal, that is, it is not transverse to the fibers of the canonical cotangent projection $\pi_N$, and consequently, it is not the image of the differential of a function on $N$.

\paragraph{Weak Morse families.} We will now describe a certain extension of the basic results on Morse families discussed in the previous paragraphs. Namely we will study cases where we have the weaker condition that $\left(\ker\, T \pi \right)^\circ \cap {\rm d} E (M)$ is a submanifold and for all $\alpha\in \left(\ker\, T \pi \right)^\circ \cap {\rm d} E (M)$ we have that
\[
T_{\alpha}\big( \left(\ker\, T \pi \right)^\circ \cap {\rm d} E (M)\big)=
T_{\alpha}\left(\ker\, T \pi \right)^\circ\cap T_{\alpha}{\rm d} E (M)\,.
\]
That is, we are assuming that $\left(\ker\, T \pi \right)^\circ$ and ${\rm d} E (M)$ are weakly transverse (this is sometimes called clean intersection).
Under the condition of weak transversality we have that
\[
\left.{\mathfrak j}_{\pi} \right |_{ \left(\ker\, T \pi\right)^\circ \cap {\rm d} E (M)}: \left(\ker\, T \pi\right)^\circ \cap {\rm d} E (M)\rightarrow T^*N
\]
is of constant rank (a subimmersion). With this assumption,  ${\mathfrak j}_{\pi} (\left(\ker\, T \pi\right)^\circ \cap {\rm d} E (M))$
is an immersed Lagrangian submanifold of $T^* N$ which may include multiple points (see~\cite{LiMarle}).

The local criteria we will be using for weak transversality of ${\rm d}E(M)$ is the following. The submanifold $\left(\ker\, T \pi\right)^\circ$ is described locally by $z_a=0$, $a=1,\dots,k$ (where $k={\dim M-\dim N}$). Let $z=(z_1,\dots,z_k)$, which is a $\mathbb{R}^k$-valued function. We define the set
\[
M_0=\{m\in M\st {\rm d}E(m)\in\left(\ker\, T \pi\right)^\circ \}\subset M.
\]
Locally, $M_0$ coincides with the set $\left(z\circ {\rm d}E\right)^{-1}(0)$. The differential of the map $z\circ {\rm d}E$ is the matrix~\eqref{eq:Morsematrix}. Therefore, if its rank is constant, using the constant rank theorem the set $\left(z\circ {\rm d}E\right)^{-1}(0)$ defines an embedded submanifold whose tangent space is $\Img {\rm d}E\cap \ker\, z$. To sum up, if the matrix in~\eqref{eq:Morsematrix} has constant rank (not necessarily maximum), ${\rm d} E(M_0)= \left(\ker\, T \pi\right)^\circ \cap {\rm d} E (M)$ is an embedded submanifold of $T^*M$ whose tangent space satisfies the clean intersection condition.

\begin{definition}\normalfont
With the notation introduced above, $E$ will be called a \emph{weak Morse family} if
$\left(\ker\, T \pi \right)^\circ$ and ${\rm d}E (M)$ are weakly transverse.\end{definition}

\begin{example}The following simple example of weak Morse family is of interest in this paper. Given the submersion $\pi: M\rightarrow N$ and a function $f: N\rightarrow {\mathbb R}$, take $E=\pi^*f: M\rightarrow {\mathbb R}$. Obviously $E$ is not a Morse family (the rank of~\eqref{eq:Morsematrix} is zero), but it is a weak Morse family and moreover
\[
{\mathfrak j}_{\pi} (\left(\ker\, T \pi\right)^\circ \cap {\rm d} E (M))=\hbox{Im } {\rm d} f\,,
\]
which is  a Lagrangian submanifold of $(T^*N, \omega_N)$.
\end{example}
\vspace{.3cm}

\paragraph{Dirac systems and Dirac systems over a weak Morse family.} Assume that $M$ is endowed with a Dirac structure $D_M$, and let $E\colon M\to \mathbb{R}$ be a given \emph{energy} function. We  consider the following implicit dynamical system: for a curve $\gamma\colon I\to M$ (where $I\subset\mathbb{R}$ is an interval), we say that $\gamma$ is a solution of the \emph{Dirac system $(D_M,{\rm d}E)$} if
\begin{equation}\label{eq:DiracDynamics}
\dot \gamma(t) \oplus {\rm d}E\left(\gamma(t)\right)\in (D_M)_{\gamma(t)}\quad  \mbox{ for all }\, t\in I.
\end{equation}
The system described by~\eqref{eq:DiracDynamics} is general enough to encompass a number of situations of interest in mathematical physics including  of course classical Lagrangian and Hamiltonian systems, but also nonholomic mechanics or electric LC circuits~\cite{2006YoshiMarsdenI,2006YoshiMarsdenII,CEF_DiracConstraints}.

We will now extend the definition of Dirac system to include more general Lagrangian submanifolds defined in terms of weak Morse families. As the examples in the next section show, this broader definition permits to describe more general dynamical systems in terms of a Dirac structure and a Lagrangian submanifold. With the notations used in this section, let $\pi\colon M\to N$ be a surjective submersion, $S_E\subset T^*N$ a Lagrangian submanifold induced by a weak Morse family $E : M  \rightarrow \mathbb{R}$, and  $D_N$ a Dirac structure on $N$. We look for curves $n(t)$ in $N$ which solve the following implicit dynamical system
\begin{equation}\label{eq:GDiracsystem}
\dot{n}(t)\oplus \mu_{n(t)}\in (D_N)_{n(t)}\quad \mbox{ for all }\, t\in I,
\end{equation}
where $\mu_{n(t)}\in (S_E)_{n(t)}$, i.e. $\mu_{n(t)}\in S_E$ and $\pi_N(\mu_{n(t)})=n(t)$ (recall that $\pi_N\colon T^*N\to N$ is the canonical projection of the cotangent bundle). We will say that the dynamical system~\eqref{eq:GDiracsystem} is a \emph{Dirac system over $E$}. We will also refer to~\eqref{eq:GDiracsystem} as the \emph{(generalized) Dirac system $(D_N, S_E)$}.

The solution curves of the Dirac system over a Morse family can be alternatively described as a projection of solution curves of the Dirac system $(D_M, {\rm d}E)$, where $D_M=\B_\pi (D_N)$ is the backward of $D_N$ by $\pi$. Recall that the backward Dirac structure $D_M$ is well defined since $\pi$ is a submersion (see~\cite{Burs}). More precisely:

\begin{theorem}\label{Thm:main} Let $I$ be an interval of $\mathbb{R}$, $E: M\rightarrow {\mathbb R}$ a weak Morse family over the submersion $\pi\colon M\rightarrow N$, $D_N$ a Dirac structure over $N$, and $D_M=\B_\pi(D_N)$ the backward Dirac structure on $M$ induced by $D_N$.
\begin{enumerate}[i)]
\item
If $m: I\rightarrow M$ is a solution of the Dirac system determined by $(D_M, {\rm d}E)$, that is, for all $t$ in $I$
\[
\dot{m}(t)\oplus {\rm d}E(m(t))\in (D_M)_{m(t)}\,,
\]
then the curve $n: I\rightarrow N$ defined by $n=\pi\circ m$ is a solution of the Dirac system determined by $(D_N, S_E)$, that is, for all $t$ in $I$,
\[
\dot{n}(t)\oplus \mu_{n(t)}\in (D_N)_{n(t)}
\]
where $\mu_{n(t)}\in (S_E)_{n(t)}$.
\item Conversely, if $n: I \rightarrow N$ is a solution of the Dirac system determined by $(D_N, S_E)$,
then there exists a solution $m(t)$ of the Dirac system $(D_M, {\rm d}E)$ which projects onto $n$, i.e. $n=\pi\circ m$.
\end{enumerate}
\end{theorem}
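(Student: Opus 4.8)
The plan is to prove both implications by unwinding the definition~\eqref{eq:backward_vectorspace} of the backward Dirac structure $D_M=\B_\pi(D_N)$, using three facts already available: (a) for each $m\in M$ one has $\Img\big((T_m\pi)^*\big)=(\ker T_m\pi)^\circ$ because $T_m\pi$ is onto; (b) on $(\ker T\pi)^\circ$ the maps $\mathfrak{j}_\pi$ and $\pi^*$ are fibrewise mutually inverse (this follows from $\pi^*\circ\mathfrak{j}_\pi={\rm id}$ together with the fact that $\mathfrak{j}_\pi$ is a fibrewise isomorphism onto $T^*N$), so in particular $\mathfrak{j}_\pi\big((T_m\pi)^*\nu\big)=\nu$ for every $\nu\in T^*_{\pi(m)}N$; and (c) the commutative diagram defining $\mathfrak{j}_\pi$, which says $\pi_N\circ\mathfrak{j}_\pi=\pi\circ\pi_M$ on $(\ker T\pi)^\circ$. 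I will also use the description of a weak Morse family recalled above, namely ${\rm d}E(M_0)=(\ker T\pi)^\circ\cap{\rm d}E(M)$ with ${\rm d}E\colon M\to{\rm d}E(M)$ a diffeomorphism onto its image, and the chain-rule identity $T_{m(t)}\pi\cdot\dot m(t)=\dot n(t)$ for $n=\pi\circ m$.

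For part (i): starting from $\dot m(t)\oplus{\rm d}E(m(t))\in(D_M)_{m(t)}$, I would read off from~\eqref{eq:backward_vectorspace} applied to $T_{m(t)}\pi$ a covector $\nu(t)\in T^*_{n(t)}N$ with ${\rm d}E(m(t))=(T_{m(t)}\pi)^*\nu(t)$ and $\big(T_{m(t)}\pi\cdot\dot m(t),\nu(t)\big)\in(D_N)_{n(t)}$. The first equality puts ${\rm d}E(m(t))$ in $(\ker T\pi)^\circ$, hence in ${\rm d}E(M_0)$, so $m(t)\in M_0$ and $\mathfrak{j}_\pi({\rm d}E(m(t)))\in S_E$; by fact (b) this element is exactly $\nu(t)$, and since $\nu(t)\in T^*_{n(t)}N$ its base point is $n(t)$, so $\nu(t)=\mu_{n(t)}\in(S_E)_{n(t)}$. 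Replacing $T_{m(t)}\pi\cdot\dot m(t)$ by $\dot n(t)$ then gives $\dot n(t)\oplus\mu_{n(t)}\in(D_N)_{n(t)}$, which is~\eqref{eq:GDiracsystem}.

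For part (ii): given a solution $n$ of $(D_N,S_E)$ with its accompanying curve $\mu$ in $S_E$ (so $\pi_N\circ\mu=n$), I would first lift $\mu$ to a differentiable curve $m\colon I\to M_0$ with $\mathfrak{j}_\pi({\rm d}E(m(t)))=\mu_{n(t)}$; this is possible because, under the clean-intersection/constant-rank hypothesis, $M_0\xrightarrow{{\rm d}E}{\rm d}E(M_0)\xrightarrow{\mathfrak{j}_\pi}S_E$ is a surjective submersion onto the immersed submanifold $S_E$, and hence admits local sections. Then fact (c) gives $\pi(m(t))=\pi_N(\mu_{n(t)})=n(t)$, so $m$ projects onto $n$ and $T_{m(t)}\pi\cdot\dot m(t)=\dot n(t)$; writing ${\rm d}E(m(t))=(T_{m(t)}\pi)^*\nu(t)$ (possible by (a), since $m(t)\in M_0$) and using (b) again one gets $\nu(t)=\mathfrak{j}_\pi({\rm d}E(m(t)))=\mu_{n(t)}$, whence $\big(T_{m(t)}\pi\cdot\dot m(t),\nu(t)\big)=(\dot n(t),\mu_{n(t)})\in(D_N)_{n(t)}$. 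By~\eqref{eq:backward_vectorspace} this is precisely $\dot m(t)\oplus{\rm d}E(m(t))\in(D_M)_{m(t)}$.

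I expect the only genuine obstacle to be the lifting step in part (ii): since $E$ is allowed to be merely a \emph{weak} Morse family, $\mathfrak{j}_\pi$ need not be injective on ${\rm d}E(M)\cap(\ker T\pi)^\circ$ and $S_E$ may carry multiple points, so one must \emph{choose} the lift and then argue that the choice can be realized by a differentiable curve — via local sections of the surjective submersion $M_0\to S_E$, working locally in $t$ and gluing if a statement over all of $I$ is wanted. Everything else, both the algebra in (i) and in (ii), is a direct definition chase. A minor point worth flagging at the outset is that a ``solution of $(D_N,S_E)$'' is understood to carry the curve $\mu$ in $S_E$ (equivalently a curve in ${\rm d}E(M)\cap(\ker T\pi)^\circ$, equivalently in $M_0$), since $S_E$ is only an immersed, possibly non-embedded, Lagrangian submanifold.
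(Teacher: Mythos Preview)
Your proof is correct and follows the same route as the paper's: both argue part~(i) by reading off from the definition of $\B_\pi(D_N)$ that ${\rm d}E(m(t))\in(\ker T\pi)^\circ$, setting $\mu_{n(t)}=\mathfrak{j}_\pi({\rm d}E(m(t)))$, and checking that $(T\pi\cdot\dot m(t),\mu_{n(t)})\in(D_N)_{n(t)}$. Your treatment of~(ii) is in fact more complete than the paper's, which dispatches the converse in a single sentence (``follows easily from the definitions''); you correctly isolate the one nontrivial step --- producing a \emph{smooth} curve $m$ in $M_0$ above a given curve $\mu$ in $S_E$ --- and handle it via local sections of the constant-rank surjection $M_0\to S_E$, a point the paper leaves implicit.
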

\begin{proof}
Using the definition of $D_M=\B_\pi(D_N)$, $\dot{m}(t)\oplus {\rm d}E(m(t))\in (D_M)_{m(t)}$ implies ${\rm d}E(m(t))\in \left.(\ker\, T\pi)^\circ\right|_{m(t)}$. Therefore, $n=\pi\circ m$ satisfies
\[
{\mathfrak j}_{\pi}\big({\rm d}E(m(t))\big)\in (S_E)_{n(t)}
\]
and thus
\[
T\pi(\dot{m}(t))\oplus {\mathfrak j}_{\pi}({\rm d}E(m(t)))\in (D_N)_{n(t)}.
\]
This means that $n(t)$ solves the generalized Dirac system $(D_N, S_E)$.

The converse follows easily from the definitions.
\end{proof}

\remark Note that notion of Dirac system over a Morse family includes the case of a standard Dirac system $(D,{\rm d}E)$ as follows. If $D$ is a Dirac structure on $M$, one can consider the identity map on $M$, $1_M\colon M\to M$, and then the energy $E$ is obviously a Morse function for $1_M$. The Dirac system $(D,S_E)$ obtained coincides with the Dirac system $(D,{\rm d}E)$.

\section{Examples}\label{sec:Examples}

The purpose of this section is to show how the notion of generalized Dirac system covers many examples of interest in mechanics and control theory.

\subsection{Lagrangian and Hamiltonian Mechanics}\label{subsecc:LagraHamil}

We will describe this first example in some detail to clarify the notations and basic results of the previous sections.

\paragraph{Lagrangian mechanics.} Let $L: TQ\rightarrow {\mathbb R}$ be a Lagrangian, possibly degenerate, and consider the Dirac structure $D_{\omega_Q}$ on $T^*Q$ induced by the canonical symplectic $2$-form $\omega_Q$ on $T^*Q$. The coordinate expression of $D_{\omega_Q}$ reads:
\begin{equation*}\label{eq:D_omega}
D_{\omega_Q} =\left\{(q^i,p_i,\dot q^i,\dot p_i,\alpha_i,\beta^i)\,\mid\, \dot p_i +\alpha_i=0,\, \dot q^i-\beta^i=0\right\}.
\end{equation*}

Define the energy function $E: TQ\oplus T^*Q\rightarrow {\mathbb R}$ by
\[
E(q,v_q,\alpha_q)=\langle \alpha_q, v_q\rangle -L(v_q)\, .
\]
It is clear that $E$ is a Morse family for the projection $\text{pr}_2: TQ\oplus T^*Q\rightarrow T^*Q$, and therefore it generates the Lagrangian submanifold $S_E$ of $T^*T^*Q$. Indeed, the local expressions for the projection and the energy are $\text{pr}_2(q^i,v^i,p_i)= (q^i,p_i)$ and $E(q^i, v^i, p_i)=p_iv^i-L(q^i, v^i)$, and the matrix
\[
\left(
\frac{\partial^2 E}{\partial q^i \partial v^j}\,,\, \frac{\partial^2 E}{\partial p_i \partial v^j}\,,\, \frac{\partial^2 E}{\partial v^i \partial v^j}
\right)_{\{i,j\}}
=\left(-\frac{\partial^2 L}{\partial q^i \partial v^j}\,,\, \mathbb{I}\, ,\, -\frac{\partial^2 L}{\partial v^i \partial v^j}
\right)_{\{i,j\}}
\]
has maximal rank ($\mathbb{I}$ denotes de identity matrix of size equal to $\dim Q$).  We will also need the kernel of $T\text{pr}_2$ and its annihilator,
\begin{align*}
\ker\, (T\text{pr}_2)^{\phantom{\circ}}&=\left\{(q^i,v^i,p_i, \dot q^i,\dot v^i,\dot p_i)\,\mid\, \dot q^i=\dot p_i= 0 \right\}\subset T(TQ\oplus T^*Q)\,,\\
\left(\ker\, (T\text{pr}_2)\right)^\circ &=
\left\{(q^i,v^i,p_i, \alpha_i,\gamma_i,\beta^i)\st \gamma_i=0 \right\}\subset T^*(TQ\oplus T^*Q)\,.
\end{align*}
The local expression of the map $\mathfrak{j}_\pi\colon (\ker\, (T\text{pr}_2))^\circ\to T^*T^*Q$ is then
\[
\mathfrak{j}_\pi(q^i,v^i,p_i, \alpha_i,\gamma_i=0,\beta^i)=(q^i,p_i, \alpha_i,\beta^i),
\]
and from here the expression of $S_E=\mathfrak{j}_\pi\left({\rm d}E(M)\cap \big(\ker\, (T\text{pr}_2)\big)^\circ\right)$ follows:
\begin{align}\label{eq:SE_Lagrangian}
S_E&= \mathfrak{j}_\pi\left(\left\{\left(q^i,v^i,p_i,-\frac{\partial L}{\partial q^i},p_i-\frac{\partial L}{\partial v^i},v^i\right)\right\} \cap \big(\ker\, (T\text{pr}_2)\big)^\circ \right) \nonumber \\
&=\left\{\left(q^i,p_i,-\frac{\partial L}{\partial q^i},v^i\right)\st p_i=\frac{\partial L}{\partial v^i}\right\}\subset T^*T^*Q.
\end{align}

To get dynamics, we consider the generalized Dirac system determined by the pair $(D_{\omega_Q}, S_E)$. Recall that a curve $\gamma\colon I\to T^*Q$ is a solution of this implicit system if it satisfies:
\[
(\gamma(t), \dot{\gamma}(t)) \oplus \mu_{\gamma(t)} \in (D_{\omega_Q})_{\gamma(t)}\quad  \mbox{ for all }\, t\in I,
\]
where $\mu_{\gamma(t)}\in (S_E)_{\gamma(t)}$. In coordinates, $\gamma(t)=(q^i(t),p_i(t))$ and a solution satisfies
\[
\frac{d q^i}{dt}=v^i,\qquad \frac{d p_i}{dt}=\frac{\partial L}{\partial q^i},\qquad p_i-\frac{\partial L}{\partial v^i}=0.
\]
Expressed in more familiar terms,
\[
\frac{d q^i}{dt}=v^i,\qquad \frac{d}{dt}\left( \frac{\partial L}{\partial v^i}\right)=\frac{\partial L}{\partial q^i},
\]
which are the well-known Euler-Lagrange equations for $L$.

\remark We can alternatively consider the canonical Dirac structure $D_M$ on $M = TQ \oplus T^* Q$ defined by the pullback of the canonical symplectic form on $N=T^* Q$ by the canonical projection $\hbox{pr}_2\colon M\to T^*Q$ (see~\eqref{eq:D_M}). We take the usual energy on $M$, $E(q,v,p)=pv-L(q,v)$. A curve $(q(t),v(t),p(t))$ is a solution of the Dirac system $(D_M,{\rm d}E)$ if it satisfies the implicit Euler-Lagrange equations obtained above.

\paragraph{Hamiltonian mechanics.} Consider a Hamiltonian $H\colon T^*Q\to\mathbb{R}$ and the Dirac structure $D_{\omega_Q}$ on $T^*Q$. It is immediate to check that the Dirac system on $T^*Q$ with energy $E=H$ given by
\[
(\dot q,\dot p)\oplus {\rm d}E\in D_{\omega_Q},
\]
leads to the classical Hamilton equations
\[
\dot{q} = \frac{\partial H }{\partial p}(q,p), \qquad \dot{p}= -\frac{\partial H }{\partial q}(q,p).
\]

\remark More details about the Dirac structure $D_{\omega_Q}$ and its relation with the geometry of the spaces $TT^*Q$, $T^*T^*Q$ and $T^*TQ$ can be found in Appendix~\ref{ap:A}.

\subsection{Mechanics on linear almost Poisson structures}\label{subsec:almost}

Let $\tau_{\A}\colon  \A \to Q$ be a vector bundle of rank $n$ over a manifold $Q$ of dimension $m$, and let $\A^*$ be the dual vector
bundle of $\A$, with corresponding vector bundle projection $\pi_{\A^*}\colon  \A^* \to Q$. Recall that a \emph{linear almost Poisson structure} on $\A^*$ is a bracket
\[
\{ \cdot , \cdot \}_{\A^*}\colon  C^{\infty}(\A^*) \times C^{\infty}(\A^*)\to C^{\infty}(\A^*)
\]
such that:
\begin{enumerate}[i)]
\item
$\{ \cdot , \cdot \}_{\A^*}$ is skew-symmetric, that is,
\[
\{\varphi, \psi \}_{\A^*} = -\{\psi, \varphi\}_{\A^*}, \quad \mbox{
for } \varphi, \psi \in C^{\infty}(\A^*).
\]

\item
$\{\cdot , \cdot \}_{\A^*}$ satisfies the Leibniz rule, that is,
\[
\{\varphi \varphi', \psi \}_{\A^*} = \varphi \{\varphi',
\psi\}_{\A^*} + \varphi' \{\varphi, \psi \}_{\A^*}, \quad \mbox{ for
} \varphi, \varphi', \psi \in C^{\infty}(\A^*).
\]

\item
$\{\cdot , \cdot \}_{\A^*}$ is linear, which by definition means that if $\varphi$ and
$\psi$ are linear functions on $\A^*$ then
$\{\varphi, \psi \}_{\A^*}$ is also a linear function.
\end{enumerate}
If, in addition, the bracket satisfies the Jacobi identity, then  $\{\cdot, \cdot\}_{\A^*}$ is called a \emph{linear Poisson
structure} on $\A^*$. We will denote by $\Lambda_{\A^*}(df, dg)=\{f, g\}_{\A^*}$ the (almost) Poisson bivector associated to an (almost) Poisson linear structure. The associated Dirac structure will be denoted ${D}_{\A^*}\subset T\A^*\oplus T^* \A^*$.

The local description of such a bracket is as follows. Let $(q^i)$, $1\leq i \leq n$ be local coordinates on an open subset $U$ of $Q$ and $\{e^{A}\}$, $1\leq A\leq \rank \A^*$ be a local basis of sections of  $\pi_{\A^*}\colon  \A^* \to Q$. Any point $\alpha_q\in \A^*$ is locally
given by $\alpha_q=p_{A}e^{A}(q)$ and, therefore, $(q^i,p_A)$ provide coordinates on $\A^*$. With respect to this system of coordinates
on $\A^*$, the linear almost Poisson bracket has the following local expressions:
\[
\{p_{A}, p_{B}\}_{\A^*} = -{\mathcal C}_{AB}^{D}p_{D}, \qquad \{q^j, p_{A}\}_{\A^*} =
\rho^{j}_{A}, \qquad \{q^i, q^j\}_{\A^*} = 0,
\]
with $C_{AB}^{D}$ and $\rho^{j}_{A}$ real
$C^{\infty}$-functions on $U$. Consequently, the linear almost Poisson bivector associated to the linear almost Poisson structure on
$\A^*$ has the following coordinate expression:
\begin{equation*}
\Lambda_{\A^*} = \rho^{j}_{A} \displaystyle
\frac{\partial}{\partial q^j} \wedge \frac{\partial}{\partial
p_{A}} - \frac{1}{2} {\mathcal C}_{AB}^{D} p_{D}
\frac{\partial}{\partial p_{A}} \wedge
\frac{\partial}{\partial p_{B}}.
\end{equation*}

We remark that this example generalizes the case $\A=TQ$, where the Poisson bivector $\Lambda_{T^*Q}$ is the canonical one associated to the canonical symplectic form. Note that both $\omega_Q$ and $\Lambda_{T^*Q}$ define the same Dirac structure on $T^*Q$.

\paragraph{Equations of motion.} We specify the dynamics giving a Lagrangian function
$L: \A\rightarrow {\mathbb R}$ with associated energy function $E\colon M\to\R$, with $M=\A\oplus \A^*$,
\[
\begin{array}{rrcl}
E:&\A\oplus \A^*&\longrightarrow& {\mathbb R}\\
   &(v_q, \alpha_q)&\longmapsto & \langle \alpha_q, v_q\rangle-L(v_q),
\end{array}
\]
which is a Morse family over the submersion $\pi_{(M,\A^*)}\colon M \rightarrow \A^*$. This Morse family generates the immersed Lagrangian submanifold $S_E$ of the symplectic manifold $(T^*\A^*, \omega_{\A^*})$. A curve $\gamma\colon  I\subset \mathbb{R}\to \A^*$ is a solution of generalized Dirac system $(D_{\A^*}, S_E)$ if it satisfies
\[
(\gamma(t), \dot{\gamma}(t)) \oplus \mu_{\gamma(t)} \in (D_{\A^*})_{\gamma(t)} \quad \forall\, t\in I.
\]
Taking coordinates $(q^i, v^A)$ on $\A$ induced by the dual basis $\{e_{A}\}$ of $\{e^A\}$, we
have coordinates $(q^i, v^A, p_A)$ on $\A\oplus \A^*$. Then $\hbox{pr}_2(q^i, v^A, p_A)=(q^i, p_A)$,  $E(q^i, v^A, p_A)=p_Av^A-L(q^i, v^A)$, and by a computation completely analogous to the one used to obtain~\eqref{eq:SE_Lagrangian}, we find:
\[
S_E=\left\{(q^i, p_A, \alpha_i, \beta^A) \in T^* \A^*
\st
\alpha_i =-\frac{\partial L}{\partial q^i},\, \beta^A = v^A,\, p_A - \frac{\partial L}{\partial v^A} = 0\right\}.
\]
Note that, in general, $S_E$ will not be the graph of the differential of a real function on $\A^*$ (but it is an immersed Lagrangian submanifold, as shown in Section~\ref{sec:Morse}). If the Lagrangian is regular, then $S_E$ can be obtained as a differential.

The expression of  $D_{\Lambda_{\A^*}}\subset T\A^*\oplus T^*\A^*$ is obtained as the graph of the bivector $\Lambda_{\A^*}$:
\[
D_{\Lambda_{\A^*}}=\{(q^i,p_A,\dot q^i,\dot p_A,\alpha_i,\beta^A)\st \dot q^i=\rho^i_A\beta^A,\, \dot p_A=-\rho^i_A\alpha_i-C^D_{AB}p_D\beta^B\}.
\]
The equations of motion will follow from the generalized Dirac system $(D_{\Lambda_{\A^*}},S_E)$. In coordinates, a curve solves the Dirac system $(D_{\A^*},S_E)$ if, and only if,
\[
\dot q^i= \rho^i_A v^A,\qquad \dot p_A=\rho^j_A\frac{\partial L}{\partial q^j}-C^D_{AB}p_Dv^B, \qquad p_A - \frac{\partial L}{\partial v^A} = 0.
\]
For the Hamiltonian description, the energy $E$ is given by the Hamiltonian $H(q,p)$, and the Dirac system $(D_{\A^*},{\rm d}E)$ leads to:
\[
\dot q^i= \rho^i_A v^A,\qquad \dot p_A=-\rho^j_A\frac{\partial H}{\partial q^j}-C^D_{AB}p_D\frac{\partial H}{\partial p_B},
\]
which agree with those in the literature, see~\cite{LeMaMa_LinearAlmostPoisson}.

\remark More details on the geometry of $D_{\Lambda_{\A^*}}$ can be found in Appendix~\ref{ap:A}.

\paragraph{Euler-Poincar\'e equations.} A particular case of the previous construction is obtained when $\A^*$ is the dual of a Lie algebra (as a vector bundle over a point). Let $G$ be a Lie group, ${\mathfrak g}$ its Lie algebra and  ${\mathfrak g}^*$ its dual. On ${\mathfrak g}^*$ we have  the $\pm$ Lie-Poisson bracket:
\[
\left\{f, g\right\}_{\pm}(\mu)=\pm \left\langle \mu, \left[ \frac{\delta f}{\delta \mu}, \frac{\delta g}{\delta \mu}\right]\right\rangle,
\]
where $\mu\in {\mathfrak g}^*$ and $\frac{\delta f}{\delta \mu}: {\mathfrak g}^*\rightarrow {\mathfrak g}$ stands for the functional derivative of $f$ and where $[\cdot,\cdot]$ is the Lie algebra bracket on ${\mathfrak g}$. It is well known that both brackets are induced by reduction of the standard Lie bracket on $T^*G$ by right or left-reduction. Let us denote by ${\mathfrak g}^{\pm}$ the Lie algebra ${\mathfrak g}$ endowed with the  $(\pm)$-Lie-Poisson bracket. The map $\sharp_{{\mathfrak g}^{\pm}}$ is simply
\[
\begin{array}{rcl}
\sharp_{{\mathfrak g}^{\pm}}: T^*{\mathfrak g}^*\equiv {\mathfrak g}^*\times {\mathfrak g}&\longrightarrow& T{\mathfrak g}^* \equiv {\mathfrak g}^*\times {\mathfrak g}^*\\
(\mu, \xi)&\longmapsto& (\mu, \mp ad^*_{\xi}\mu),
\end{array}
\]
and we have a Dirac structure defined on ${\mathfrak g}^*$, ${D}_{{\mathfrak g}^{\pm}}\subset T{\mathfrak g}^*\oplus T^* {\mathfrak g}^*\equiv {\mathfrak g}^*\oplus {\mathfrak g}^*\oplus {\mathfrak g}$. Given a Lagrangian $L: {\mathfrak g}\longrightarrow {\mathbb R}$, frequently defined as a reduced Lagrangian from $TG$, we have the Morse family
\[
\begin{array}{rcl}
E: {\mathfrak g}\oplus {\mathfrak g}^*&\longrightarrow& {\mathbb R}\\
(\xi, \mu)&\longmapsto& \langle \mu, \xi\rangle -L(\xi),
\end{array}
\]
generating the Lagrangian submanifold $S_E$ of $T^* {\mathfrak g}^*\equiv {\mathfrak g}^*\times {\mathfrak g}$ given by
\[
S_E=\left\{(\mu, \xi)\in {\mathfrak g}^*\times {\mathfrak g}\st \mu=  \frac{\delta L}{\delta \xi}\right\}.
\]
The solutions of the generalized Dirac system $({D}_{{\mathfrak g}^{\pm}}, S_E)$ are curves $\mu: I\subset {\mathbb R}\longrightarrow {\mathfrak g}^*$ such that
\[
(\mu(t), \dot{\mu}(t), \xi(t)) \in ({D}_{{\mathfrak g}^{\pm}})_{\mu(t)}\quad  \mbox{ for all }\, t\in I,
\]
where $(\mu(t),\xi(t))\in S_E$. Therefore, the equations are:
\[
\frac{d\mu}{dt}= \mp ad^*_{\xi}\mu, \quad  \mu =  \frac{\delta L}{\delta \xi},
\]
which correspond to the Euler-Poincar\'e equations
\[
\frac{d}{dt}\left(\frac{\delta L}{\delta \xi}\right)= \mp ad^*_{\xi}\frac{\delta L}{\delta \xi}\; .
\]
We refer the reader to~\cite{Holm_Book,MR_book} for more details.

\paragraph{Euler-Poincar\'e equations with advected parameters.} Another important class of examples comes from actions of Lie algebras on manifolds.  Given a  homomorphism  $\Phi $ from the Lie algebra ${\mathfrak g}$ to the Lie algebra of vector fields on $Q$, ${\mathfrak X}(Q)$, we can induce a linear Poisson bracket on the trivial bundle ${\A}^*=Q \times {\mathfrak g}^*\rightarrow Q$ as follows:
\begin{eqnarray*}
\{f, g\}_{{\A}^*}&=&-\left\langle \mu, \left[ \frac{\delta f}{\delta \mu}, \frac{\delta g}{\delta \mu}\right]\right\rangle+dg_{q}\left(\Phi \left(\frac{\delta f}{\delta \mu}\right)\right)
 -df_{q}\left(\Phi \left(\frac{\delta g}{\delta \mu}\right)\right).
\end{eqnarray*}
Here $d_qf$ stands for the differential of $f$ with respect to $q\in Q$, and the evaluation point $(q, \mu)$ has been suppressed. It is not difficult to check that
\[
\begin{array}{rcl}
\sharp_{{\A}^*}: T^*(Q\times {\mathfrak g}^*)\equiv T^* Q\times {\mathfrak g}^*\times {\mathfrak g}&\longrightarrow&T(Q\times {\mathfrak g}^*)\equiv TQ\times {\mathfrak g}^*\times {\mathfrak g}^*\\
(\alpha_q, \mu, \xi)&\longmapsto& (  -\Phi(\xi)_q,\mu, ad^*_{\xi}\mu+{\mathbf J} (\alpha_q)),
\end{array}
\]
where ${\mathbf J}: T^*Q\rightarrow {\mathfrak g}^*$ is the associated cotangent momentum map $\langle {\mathbf J}(\alpha_q), \xi\rangle=\langle \alpha_q, \Phi(\xi)_q\rangle$.

For a Lagrangian $L: {\A}=Q \times {\mathfrak g}\rightarrow {\mathbb R}$ we define the Morse family $E: Q\times {\mathfrak g}\times {\mathfrak g}^*\longrightarrow {\mathbb R}$ given by
$E(q, \xi, \mu)=\langle \mu, \xi\rangle - L(q,\xi)$ which generates
 the following Lagrangian submanifold $S_E$ of $T^* (Q\times {\mathfrak g}^*)=T^* Q\times {\mathfrak g}^*\times {\mathfrak g}$
\[
S_E=\left\{(\alpha, \mu, \xi)\in T^* Q\times {\mathfrak g}^*\times {\mathfrak g}\st  \alpha=\frac{ \delta L}{\delta q},\,  \mu=  \frac{\delta L}{\delta \xi}\right\}.
\]
The equations of motion are derived now using the generalized Dirac system $(D_{{\A}^*}, S_E)$ and the corresponding  solutions  are curves $(q,\mu): I\subset {\mathbb R}\longrightarrow Q\times {\mathfrak g}^*$
such that
\[
(q(t), \mu(t), \dot{q}(t), \dot{\mu}(t), \xi(t)) \in (D_{{\A}^*})_{\{q(t), \mu(t)\}} \quad \forall\,  t\in I,
\]
where $(q(t), \mu(t),\xi(t))\in S_E$. Therefore, the equations are:
\[
\frac{dq}{dt}=-\Phi(\xi(q)),\quad \frac{d\mu}{dt}= ad^*_{\xi}\mu-{\mathbf J} (\alpha), \quad \mu =  \frac{\delta L}{\delta \xi},\quad \alpha=\frac{ \delta L}{\delta q},
\]
which corresponds to the following equations
\[
\frac{d}{dt}\left(\frac{\delta L}{\delta \xi}\right)= ad^*_{\xi}\frac{\delta L}{\delta \xi}+{\mathbf J} \left(\frac{ \delta L}{\delta q}
\right), \quad \frac{dq}{dt}=-\Phi(\xi(q)).
\]

As a particular case, suppose that $G$ is a Lie group acting by left representation on a vector space $V$, and denote by $v\mapsto gv$ the left representation of $g\in G$ on $v\in V$.
Then,  $G$ also acts on the left on its dual space $V^*$. For each $v\in V$, denote by $\rho_v: {\mathfrak g}\rightarrow V$ the  linear map given by
\[
\rho_v(\xi)=\left.\frac{d}{dt}\right |_{t=0} \hbox{exp}(\xi t)v ,
\]
and denote by $\rho_v^*: V^*\rightarrow {\mathfrak g}^*$ the map
\[
\langle \rho_v^*(a), \xi\rangle=\langle a,  \rho_v(\xi)\rangle,\quad a\in V^*\; , \ \xi\in {\mathfrak g}.
\]
 We will use the common notation
\[
\rho_v^*a=v \diamond a\in {\mathfrak g}^*.
\]

Particularizing the previous discussion to this case, we obtain that ${\A}=V^*\times {\mathfrak g}\rightarrow V^*$ where now the homomorphism  $\Phi: {\mathfrak g}\rightarrow \hbox{End } (V^*)$,
is given by
\[
\langle \Phi(\xi)a, v\rangle=\langle v \diamond a, \xi\rangle.
\]
In this case ${\A}^* =V^*\times {\mathfrak g}^*\rightarrow V^*$ and the linear Poisson structure is given by
\[
\begin{array}{rcl}
\sharp_{{\A}^*}:  V^*\times V\times {\mathfrak g}^*\times {\mathfrak g}&\longrightarrow&V^*\times V^*\times {\mathfrak g}^*\times {\mathfrak g}^*\\
(a, v, \mu, \xi)&\longmapsto& (a, -\Phi(\xi)a,  v \diamond a,\mu, ad^*_{\xi}\mu+v \diamond a).
\end{array}
\]
Given a Lagrangian $L: V^*\times {\mathfrak g}\rightarrow {\mathbb R}$ we obtain the Euler-Poincar\'e equations with advected parameters
\[
\frac{d}{dt}\left(\frac{\delta L}{\delta \xi}\right)=  ad^*_{\xi}\frac{\delta L}{\delta \xi}+\left(\frac{ \delta L}{\delta a}\diamond a
\right),\quad \frac{da}{dt}=-\Phi(\xi)a.
\]
We refer the reader to~\cite{HoMaRa_Semidirect} for many interesting applications of these equations. See also~\cite{CeMa} and~\cite{CeIbMa} for a discussion of variational principles in this context.

\paragraph{Nonholonomic mechanics.}
Now we will show that the formalism of Dirac systems covers some interesting cases of nonholonomic mechanics (see~\cite{Bloch} and references therein).  This particular type of constrained mechanical systems have a considerable practical  interest  since
nonholonomic constraints are present in a great variety of engineering and robotic tools describing the dynamics of wheeled vehicles, manipulation gadgets and locomotion, etc.
For simplicity and since  most examples are in that category, we will only consider systems subjected to linear constraints. In this case, the dynamics is described by a $C^{\infty}$-distribution  $\A$ on the configuration space $Q$, that is, by a linear vector subbundle
$\tau_{\A}\colon \A\rightarrow {\mathbb R}$ of $TQ$ with canonical inclusion $i_{\A}\colon \A\rightarrow TQ$.
We say that $\A$ is holonomic if $\A$  is integrable or involutive and nonholonomic otherwise, that is, a regular linear velocity
constraint $\A$ is nonholonomic if it is not holonomic.
Therefore, we will say that a curve $\gamma\colon I\subseteq {\mathbb R}\rightarrow Q$ satisfies the constraints given by $\A$  if
\[
\dot{\gamma}(t)=\frac{d\gamma}{dt}(t)\in \A_{\gamma(t)}=\tau^{-1}_{\A}(\gamma(t))\; .
\]

Mathematically a mechanical nonholonomic system is given by the following data:
\begin{enumerate}[1)]
\item  A Lagrangian $L\colon TQ\rightarrow {\mathbb R}$ of the form:
\[
L(v_q)=\frac{1}{2}g(v_q, v_q) - V(q), \quad v_q\in T_qQ.
\]
Here $g$ denotes a Riemannian metric on the configuration space $Q$ and $V\colon Q\rightarrow {\mathbb R}$ is a potential function. Locally, the metric is determined by the non-degenerate symmetric matrix $(g_{ij})_{1\leq i,j\leq \dim Q}$ such that $g=g_{ij}(q)\, dq^i\otimes d q^j$. Therefore
\[
L(q^i, \dot{q}^i)=\frac{1}{2}g_{ij}\dot{q}^i\dot{q}^j-V(q).
\]
\item A vector subbundle $\tau_{\A}\colon \A\rightarrow Q$.
\end{enumerate}

\noindent Given this two data, the equations of motion  in nonholonomic mechanics are completely determined by the Lagrange-d'Alembert
principle. This principle states that a curve $\gamma\colon I\subseteq  {\mathbb R}\rightarrow Q$  is an admissible motion of the system if
\[
\delta \int_0^T L(\gamma(t), \dot{\gamma}(t))\; dt=0
\]
for variations which satisfy $\delta \gamma\in \A_{\gamma(t)}$. Locally, if $\gamma(t)=q^i(t)$ then, from the Lagrange-d'Alembert principle, we arrive at the well-known nonholonomic equations
\begin{eqnarray*}
\frac{d}{dt}\left(
\frac{\partial L}{\partial \dot{q}^i}\right)-\frac{\partial L}{\partial q^i}&=&\lambda_{\alpha} \mu^{\alpha}_i,\\
\mu^{\alpha}_i\dot{q}^i&=&0,
\end{eqnarray*}
where $\A^0=\hbox{span}\{\mu^{\alpha}_i\, d{q}^i, 1\leq \alpha\leq  \hbox{rank}\, \A\}$.

To present the previous nonholonomic equations as a generalized Dirac system we proceed as follows (see~\cite{GrLeMaMa, LeMaMa_LinearAlmostPoisson}). Using the Riemannian metric $g$ we have the orthogonal decomposition
$TQ=\A\oplus \A^{\perp}$ and the corresponding
 orthogonal projectors
\begin{align*}
{\mathcal P}\colon &TQ\to  \A,\\
{\mathcal Q}\colon &TQ\to
\A^{\perp}.
\end{align*}

Now define the following linear almost Poisson bracket  $\{\cdot , \cdot \}_{{\A}^\ast}$ on ${\A}^*$ by:
\[
\{f, g\}_{{\A}^*}=\{f\circ i_{\A}^*, g\circ i_{\A}^*\}_{T^*Q}\circ {\mathcal P}^*,
\]
where $i_{\A}:\A\to TQ$ denotes the inclusion, $\{\cdot , \cdot \}_{T^*Q}$ is the standard Poisson bracket of the cotangent bundle and $f, g\in C^{\infty}(\A^*)$. Denote by $\Lambda_{\A^*}(df, dg)=\{f, g\}_{{\A}^\ast}$, and by $D_{\Lambda_{\A^*}}$  the associated Dirac structure on this almost Poisson manifold
(see  \cite{LeMaMa_LinearAlmostPoisson}, and also~\cite{VaMa, CaLeMa1}). Defining $E: \A\oplus \A^*\rightarrow {\mathbb R}$ by
\[
E(v_q, \alpha_q)=\langle \alpha_q, v_q\rangle-L(v_q).
\]
the solutions of the nonholonomic equations are expressed as the solutions of the generalized Dirac system determined by $(D_{\Lambda_{\A^*}}, {\rm d}E)$.

Locally,  consider a basis of vector fields $\{X_a, X_{\alpha}\}$ on $Q$, $1\leq a \leq m$ and   $m+1\leq  \alpha \leq n$,
such that
\[
\A_q=\hbox{span }\{ X_a(q)\} \   \hbox{ and }\ \A^\perp_q=\hbox{span }\{ X_{\alpha}(q)\}.
\]
Using this decomposition it is easy to induce coordinates $(x^i, y^a)$ in $\A$ such that for any vector $v_q\in T_qQ$
\[
v_q=y^aX_a(q)+y^{\alpha} X_{\alpha}(q).
\]
These coordinates are known as quasi-velocities with respect to the frame $\{X_a,X_\alpha\}$, and they turn out to be very useful in the context of nonholonomic mechanics (we refer the reader to~\cite{CraMest_Nonholonomic} for more details; see also~\cite{BloMarZenk_Quasi}). Observe that in these coordinates, if $v_q\in \A_q$, then $y^{\alpha}=0$ represents the nonholonomic constraints.  Consequently, $\A$ is described by the coordinates $(q^i, y^a)$. In these adapted coordinates the nonholonomic equations are rewritten as:
\begin{eqnarray*}
&&\frac{d}{dt}\left(\frac{\partial l}{\partial y^a}\right)+{\mathcal C}^c_{ab}y^b\frac{\partial l}{\partial y^c}-X^i_a\frac{\partial l}{\partial q^i}=0,\\
&&\dot{q}^i=X^i_a(q)y^a,
\end{eqnarray*}
where $X_a=X^i_a(q)\frac{\partial}{\partial q^i}$, ${\mathcal P}[X_a, X_b]={\mathcal C}_{ab}^c X_c$  and $l\colon {\mathcal A}\rightarrow {\mathbb R}$ is the restriction of $L$ to $\A$, that is, $L_{|{\A}}=l$. The almost Poisson bracket satisfies:
\[
\{p_a,p_b\}=-C^c_{ab}p_c,\qquad \{q^i,p_a\}=X^i_a,\qquad \{q^i,q^j\}=0,
\]
where the $\{p_a\}$ are implicitely defined by the relation $p_a=\frac{
\partial l}{\partial y^a}$.

\subsection{Constrained variational calculus on linear almost Poisson manifolds}

Let $L: M\rightarrow \mathbb{R}$ be a constrained Lagrangian where  $M \subset TQ$ is a constraint submanifold. A constrained variational problem \cite{Arnold88}, \cite{2002CLMM}, \cite{2008IMMS} consists on finding critical points of an action functional
\[
 \int_{t_0}^{t_1} L(q(t), \dot{q}(t))\,{\rm d}t
\]
on the family of curves satisfying some  fixed endpoints condition as, for instance, $q(t_0) = q_0$, $q(t_1) = q_1$ and, besides, satisfying the constraints, that is, $(q(t), \dot{q}(t))\in M_{q(t)}$, for all $t \in (t_0, t_1)$. The submanifold $M$ is  $(2n-m)-$dimensional and is locally determined by the vanishing of constraint functions, $\Phi^a=0$, $1\leq a\leq m$, where $\Phi^a:TQ\rightarrow {\mathbb R}$.

We are implicitly  assuming that the solution curves $q(t)$ admit nontrivial variations in the space of curves satisfying the constraints; that is, we are dealing with normal solutions,  in opposition to the \emph{abnormal} ones, which are pathological curves that do not admit nontrivial variations.

In the case of normal solutions it is possible to characterize the solutions by  using the standard  procedure of Lagrange multipliers. The usual way to present the equations of motion of vakonomic mechanics is the following:

\begin{equation}\label{eq:asdo}
\left\{
\begin{array}{l}
\displaystyle{ \frac{d}{d t}\left( \frac{\partial \tilde{L}}{\partial \dot{q}^i} \right) -
\frac{\partial \tilde{L}}{\partial q^i}  = \dot{\lambda}_{a} \frac{\partial \Phi^{a}}{\partial \dot q^i}
 + \lambda_{a} \left[ \frac{d}{d t}\left( \frac{\partial \Phi^{a}}{\partial \dot{q}^i} \right) -
\frac{\partial \Phi^{a}}{\partial q^i}\right]}\,,\\
\\
\Phi^{a}(q, \dot{q}) = 0, \ 1 \leq a \leq m \,,
\end{array}
\right.
\end{equation}
where $\lambda_a$ are  Lagrange multipliers to be determined and $\tilde{L}\colon TQ\rightarrow {\mathbb R}$ is an arbitrary extension of $L$ to $TQ$. The equations~\eqref{eq:asdo} can be seen as the Euler-Lagrange equations for the extended Lagrangian $\mathcal{L}=\tilde{L}+\lambda_a\Phi^a$. Note that if we consider the extended Lagrangian $\lambda_0\tilde L+\lambda_a\Phi^a$, with $\lambda_0=0$ or $1$, then we recover all the solutions, both the normal and the abnormal ones \cite{Arnold88}.

We will see how our scheme is easily adapted to the case of constrained variational problems. Assume, for sake of simplicity,  that the restriction $(\tau_Q)_{|M}\colon M\rightarrow Q$ is a surjective submersion. In this case,  we can choose  coordinates $(q^i, \dot{q}^A)$ on $M$ and the constraints are rewritten $\Phi^a(q^i, \dot{q}^i)=\varphi^a (q^i, \dot{q}^A)-\dot{q}^a$. In other words, $M$ admits the local description:
\[
M= \{(q^i,\dot q^A, \dot q^a) | \;  \dot q^a=\varphi^a(q^i, \dot q^A)\} \, .
\]

Given the Lagrangian $L\colon M\rightarrow {\mathbb R}$, define the function $E: M\times_Q T^*Q\rightarrow {\mathbb R}$ by
\[
E(v_q, \alpha_q)=\langle \alpha_q, v_q\rangle -L(v_q)\;,
\]
where $v_q\in M_q$ and $\alpha_q\in T_q^* Q$.
In adapted coordinates
\[
E(q^i, \dot{q}^A, p_i)=\langle p_A, \dot{q}^A\rangle +\langle p_a,  \varphi^a (q^i, \dot{q}^A)\rangle -L(q^i, \dot{q}^A)\; .
\]
It is a simple computation to show that $E$ is a Morse family for the projection $\hbox{pr}_2: M\times_Q T^*Q\rightarrow T^*Q$, and it generates the following immersed Lagrangian submanifold $S_E$ of $T^*T^*Q$:
\begin{align*}
S_E=&\left\{(q^i, p_i, \alpha_i, \beta^i)\in T^*T^*Q\st \alpha_i=\frac{\partial E}{\partial q^i},\, \beta^i=\frac{\partial E}{\partial p_i},\,
\frac{\partial E}{\partial v^A}=0\right\}\\
=&\left\{(q^i, p_i, \alpha_i, \beta^i)\in T^*T^*Q\st \alpha_i=-\frac{\partial L}{\partial q^i}+p_a\frac{\partial \varphi^a}{\partial q^i},\, \beta^A=v^A,\, \beta^a=\varphi^a(q^i, \dot{q}^A)\right.,\\
&\;\;\left.p_A+p_a\frac{\partial \varphi^a}{\partial v^A}-\frac{\partial L}{\partial v^A}=0\right\}.
\end{align*}

If we consider the Dirac structure $D_{\omega_Q}$ on $T^*Q$, then a curve $\gamma(t)=(q^i(t), p_i(t))$ is a solution of the Dirac system $(D_{\omega_Q}, S_E)$ if
\[
\frac{d q^A}{dt}=v^A,\quad \frac{d q^a}{dt}=\varphi^a(q^i, \dot{q}^A),\quad \frac{d p_i}{dt}=\frac{\partial L}{\partial q^i},\quad
p_A=\frac{\partial L}{\partial v^A}-p_a\frac{\partial \varphi^a}{\partial v^A}.
\]

In other terms,
\begin{eqnarray*}
\frac{d q^i}{dt}&=&v^i,\\
\frac{d q^a}{dt}&=&\varphi^a(q^i, \dot{q}^A),\\
\frac{d p_a}{dt}&=&\frac{\partial L}{\partial q^a}-p_b\frac{\partial \varphi^b}{\partial q^a},\\
\frac{d }{dt}\left(\frac{\partial L}{\partial v^A}-p_a\frac{\partial \varphi^a}{\partial v^A}\right)&=&\frac{\partial L}{\partial q^i}-p_a\frac{\partial \varphi^a}{\partial q^A}.
\end{eqnarray*}
These equations are equivalent to the equations~\eqref{eq:asdo}, where $\Psi^a(q^i, \dot{q}^i)=\varphi^a(q^i, \dot{q}^A)-\dot{q}^a$ and $\lambda_a=\frac{\partial L}{\partial q^a}-p_a$.

\paragraph{The case of a general vector bundle.} The same procedure works in the case of reduced constrained systems where we have a Dirac structure on a linear almost Poisson manifold ${\A}^*$, denoted ${D}_{\Lambda_{{\A}^*}}$, as discussed earlier in Section~\ref{subsec:almost}. Besides the vector bundle $\tau_{\A}: {\A}\rightarrow  Q$, we assume that we have a fiber bundle  $\tau_M: M\rightarrow Q$ with $M\subset {\A}$ (which, in general, is not a vector subbundle) and a Lagrangian $L: M\rightarrow {\mathbb R}$ .

With the notation used above, let $\hbox{pr}_2$ denote the projection
 \[
\hbox{pr}_2: M\times_Q {\A}^*\longrightarrow {\A}^*,
 \]
and take as a Morse family
 \[
 \begin{array}{rrcl}
 E:& M\times_Q {\A}^*&\longrightarrow& {\mathbb R},\\
     &(v_q, \alpha_q)&\longmapsto& \langle \alpha_q, v_q\rangle -L(v_q),
     \end{array}
     \]
where $v_q\in M_q$ and $\alpha_q\in \A^*_q$. The equations corresponding to the generalized Dirac system determined by the pair $({D}_{\Lambda_{{\A}^*}}, S_E)$ are:
\[
(\gamma(t), \dot{\gamma}(t)) \oplus \mu_{\gamma(t)} \in ({D}_{\Lambda_{{\A}^*}})_{\gamma(t)} \quad   \forall\, t\in I,
\]
where $\mu_{\gamma(t)}\in (S_E)_{\gamma(t)}$. In order to write locally the equations of motion, we will choose local fiber bundle coordinates
$(q^i,y^A, y^a)$ of ${\A}$ such that
\begin{equation*}
M= \{(q^i,y^A, y^a) | \;  y^a=\varphi^a(q^i, y^A)\} \, ,
\end{equation*}
and we obtain
\begin{align*}
E(q^i,y^A, y^a, p_A, p_a)
&= p_A y^A+p_a\varphi^a(q^i,y^A)-L(q^i, y^A).
\end{align*}
Observe now that
\begin{align*}
S_E=&\left\{(q^i, p_A, p_a, \alpha_i, \beta^A, \beta^a)\in T^*{\A}^*\st \alpha_i=\frac{\partial E}{\partial q^i},\, \beta^A=\frac{\partial E}{\partial p_A},\,  \beta^a=\frac{\partial E}{\partial p_a},\,
\frac{\partial E}{\partial v^A}=0\right\}\\
=&\left\{(q^i, p_A, p_a, \alpha_i, \beta^A, \beta^a)\in T^*{\A}^*\; |\; \alpha_i=-\frac{\partial L}{\partial q^i}+p_a\frac{\partial \varphi^a}{\partial q^i},\, \beta^A=y^A,\, \beta^a=\varphi^a(q^i, v^A)\right.,\\
&\;\;\left.p_A+p_a\frac{\partial \varphi^a}{\partial y^A}-\frac{\partial L}{\partial y^A}=0\right\}.
\end{align*}
and the solutions $\gamma(t)=(q^i(t), p_A(t), p_a(t))$ of the Dirac system $({D}_{\Lambda_{{\A}^*}}, S_E)$ verify the following systems of equations:
\begin{eqnarray*}
 \dot{q}^i&=&\rho^i_A y^A +\rho^i_a \varphi^a(q^i,y^A),\\
 \dot{p}_{a}&=&\rho^j_{a}\frac{\partial L}{\partial q^j}-\rho^j_a p_b\frac{\partial \varphi^b}{\partial q^j}
-{C}_{aB}^{D} p_{D}y^{B}-{C}_{aB}^{c} p_{c}y^{B}\\
&&-{C}_{ab}^{D} p_{D}\varphi^b(q^i, y^A)-{C}_{ab}^{c} p_{c}\varphi^b(q^i, y^A),\\
\dot{p}_{A}&=&\rho^j_{A}\frac{\partial L}{\partial q^j}-\rho^j_A p_b\frac{\partial \varphi^b}{\partial q^j}
-{C}_{AB}^{D} p_{D}y^{B}-{C}_{AB}^{c} p_{c}y^{B}\\
&&-{C}_{Ab}^{D} p_{D}\varphi^b(q^i, y^B)-{C}_{Ab}^{c} p_{c}\varphi^b(q^i, y^B),\\
 p_{A}&=&\frac{\partial L}{\partial y^A}-p_a\frac{\partial \varphi^a}{\partial y^A}.
\end{eqnarray*}
These equations can be found in~\cite{2008IMMS}.

\paragraph{Higher-order mechanics.} Our geometric approach also recovers the higher-order mechanics whose Lagrangian function is given by: $L\colon T^{(k)}Q \rightarrow \mathbb{R}$ and we have the Dirac structure $D_{\omega_{T^*T^{(k-1)}Q}}$ on $T^*T^{(k-1)}Q$ defined by the natural symplectic structure on the cotangent bundle $T^*T^{(k-1)}Q$ over $T^{(k-1)}Q$. It is important to note that the manifold $T^{(k)}Q$ can be embedded into $TT^{(k-1)}Q$ fitting into the following commutative diagram:
\begin{figure}[H]
\centering
\includegraphics{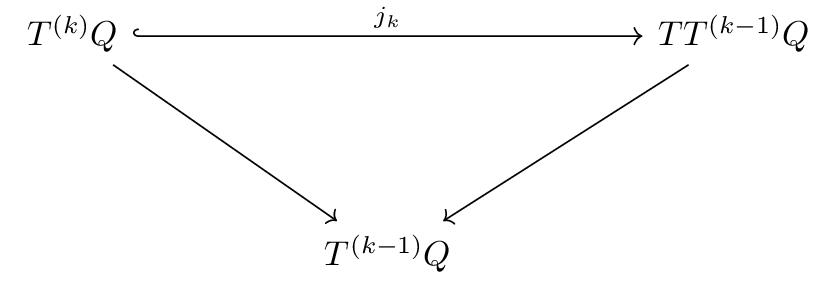}
\end{figure}
Hence, we can define the Morse family as follows
 \[
 \begin{array}{rrcl}
 E:& T^{(k)}Q\times_{T^{(k-1)}Q }T^*T^{(k-1)}Q&\longrightarrow& {\mathbb R},\\
     &(v, \alpha)&\longmapsto& \langle \alpha, j_k(v)\rangle -L(v),
     \end{array}
     \]
where $v\in T^{(k)}Q$ and $\alpha\in T^*T^{(k-1)}Q$. The reader can obtain as an exercise the equations of motion corresponding to the generalized Dirac system determined by $(D_{\omega_{T^*T^{(k-1)}Q}},S_E)$ and check that they are precisely the equations for the higher-order mechanics in~\cite{BookLeon}. We point out that in this case $S_E$ is a submanifold of $T^*T^*T^{(k-1)}Q$.

\subsection{Optimal control theory}

We will adopt the notation of Section~\ref{subsec:almost}. We consider the optimal control problem of an autonomous system with fixed initial and final boundary conditions $q_0$ and $q_T$, where $[0, T]\subset {\mathbb R}$ is a fixed interval. The set of admissible controls are piecewise-continuous functions of time taking values on a set $U\subset {\mathbb R}^m$.
The state or control equations  and associated boundary conditions have the form
\[
\dot{q}=F(q, u)\, ,\quad q(0)=q_0, \quad   q(T)=q_T,
\]
and the cost functional is:
\[
\int^T_0 L(q(t), u(t))\, dt.
\]
The optimal control problem consists of finding the minimum value of the cost functional over the control set and to determine  the solution of the state equations  for this optimal control. The standard way to single out the solution candidates is via Pontryagin's maximum principle \cite{Pontryagin_book}, but we will show here how to characterize them using our framework.

Geometrically, an optimal control problem is determined by a control bundle $\tau_C: C\rightarrow Q$ (typically, $C=Q\times U$), a fibered mapping $F: C\rightarrow TQ$ such that $\tau_C=\tau_Q\circ F$ and a cost function $L\colon  C\to {\mathbb R}$. Consider the bundle $C\mbox{$\;$}_{\tau_C} \kern-3pt\times_{\pi_{Q}}  T^*Q$
and the projection
 \[
\hbox{pr}_2: C\mbox{$\;$}_{\tau_C} \kern-3pt\times_{\pi_{Q}}  T^*Q\longrightarrow T^*Q.
   \]
The function $E: C\mbox{$\;$}_{\tau_C} \kern-3pt\times_{\pi_{Q}}  T^*Q\longrightarrow {\mathbb R}$ defined by
\[
E(u_q, \alpha_q)=\langle \alpha_q, F(u_q)\rangle-L(u_q), \quad u_q\in C_q,\qquad \alpha_q\in T_q^*Q\, ,
\]
is not in general a Morse family over ${\rm pr}_2$. Locally, $E(q^i,u^a, p_i)=p_i F^i(q^j, u^b)-L(q^j,u^b)$ (Pontryagin's Hamiltonian).
The matrix
\begin{equation*}
\begin{pmatrix}  p_j \dfrac{\partial^2 F^j}{\partial q^i\partial u^a}-\dfrac{\partial^2L}{\partial q^i \partial u^a}, &  p_j \dfrac{\partial^2 F^j}{\partial u^a\partial u^b}-\dfrac{\partial^2L}{\partial u^a \partial u^b}, & \dfrac{\partial F^j}{\partial u^a}\end{pmatrix}
\end{equation*} does not necessarily have maximum rank and thus it is not, in general, a Morse family.
It may happen that $S_E$  is not a  immersed submanifold of $T^*T^*Q$.
In any case, we can consider the system determined  by $({D}_{\omega_Q},S _E)$ and the solutions are now given by:
\[
(\gamma(t), \dot{\gamma}(t)) \oplus \mu_{\gamma(t)} \in ({D}_{\omega_Q})_{\gamma(t)} \quad \forall\, t\in I,
\]
where $\mu_{\gamma(t)}\in (S_E)_{\gamma(t)}$.
Locally,
\begin{align*}
S_E=&\left\{(q^i, p_i, \alpha_i,\beta^i)\in T^* T^* Q \st \exists  u\in U \; \mbox{such that }\;
\alpha_i=p_j \dfrac{\partial F^j}{\partial q^i}-\dfrac{\partial L}{\partial q^i},\right. \\
&\;\;\left.\beta^i=F^i(q, u),\, p_i \dfrac{\partial F^i}{\partial u^a}-\dfrac{\partial L}{\partial u^a}=0\right\}.
\end{align*}
Following the same procedure as in the previous sections  we obtain the equations of motion for the system $(D_{\omega_Q},S_E)$
\begin{eqnarray*}
\dot{q}^j&=& F^j(q, u) =\frac{\partial E}{\partial p_j}\, ,\\
\dot{p}_i &=&\left(\dfrac{\partial L}{\partial q^i}-p_j\dfrac{\partial F^j}{\partial q^i} \right)=-\frac{\partial E}{\partial q^i}, \\
0&=&p_i \dfrac{\partial F^i}{\partial u^a}-\dfrac{\partial L}{\partial u^a}=\frac{\partial E}{\partial u^a}\,,
\end{eqnarray*}
which are the typical Pontryagin's equations for the Hamiltonian function $E: C\mbox{$\;$}_{\tau_C} \kern-3pt\times_{\pi_{Q}}  T^*Q\longrightarrow {\mathbb R}$.

\paragraph{The case of general bundles.} An interesting generalization of the previous optimal control problem consists of replacing $TQ$ by a vector bundle $\A$ over $Q$. That is the following data are given:
a control bundle $\tau_C: C\rightarrow Q$ and the fibered mapping $F: C\rightarrow \A$ (such that $\tau_C=\tau_A\circ F$) ,
a cost function $L\colon  C\to {\mathbb R}$ and  the   Dirac structure  $D_{\Lambda_{{\A}^*}}$ defined in  Section~\ref{subsec:almost}. Consider the bundle
$C\mbox{$\;$}_{\tau_C} \kern-3pt\times_{\pi_{{\A}^*}}  {\A}^*$
and the projection
\[
\hbox{pr}_2: C\mbox{$\;$}_{\tau_C} \kern-3pt\times_{\pi_{{\A}^*}}  {\A}^*\longrightarrow {\A}^*.
\]
The function $E: C\mbox{$\;$}_{\tau_C} \kern-3pt\times_{\pi_{{\A}^*}}  {\A}^*\longrightarrow {\mathbb R}$ is defined by
\[
E(u_q, \alpha_q)=\langle \alpha_q, F(u_q)\rangle-L(u_q)\, , \quad u_q\in C_q\, ,\quad \alpha_q\in {\A}^*_q.
\]
Locally
\begin{align*}
S_E=&\left\{(q^i, p_A, \alpha_i,\beta^A)\in T^* T^* Q  \st \exists  u\in U \; \mbox{such that }\;
\alpha_i=p_A \dfrac{\partial F^A}{\partial q^i}-\dfrac{\partial L}{\partial q^i},\right. \\
&\left.\;\;\beta^A=F^A(q,u),\, p_A \dfrac{\partial F^A}{\partial u^a}-\dfrac{\partial L}{\partial u^a}=0\right\}.
\end{align*}
From the Dirac system $({D}_{\Lambda_{{\A}^*}} , S_E)$ we obtain the equations of motion
\begin{eqnarray*}
\dot{q}^j&=&\rho^j_A F^A(q, u) \, ,\\
\dot{p}_A &=& \rho^j_A\left(\dfrac{\partial L}{\partial q^j}-p_B\dfrac{\partial F^B}{\partial q^j} \right)-{\mathcal C}^C_{A B} p_C F^B(q,u)\, , \\
0&=&p_A \dfrac{\partial F^A}{\partial u^a}-\dfrac{\partial L}{\partial u^a}\,.
\end{eqnarray*}
These equations can be compared with the equations obtained for an optimal control problem defined on Lie algebroids in~\cite{2007Martinez}. When the function $E$ is a Morse family, then everything falls into the description in~\cite{2012BaIgleMar} where the integrability algorithm is used to find the solutions. In the next section the algorithm will be adapted to the family of generalized Dirac systems described in this paper.

\section{Integrability algorithm and Dirac systems}\label{sec:algorithm}

We are going to adapt the integrability algorithm developed in~\cite{1995MMT} to solve implicit differential equations to the case of Dirac systems defined by Morse families.  We first review the integrability algorithm prior to adapt it to the unified framework described in this paper.

\paragraph{Integrability algorithm.} When an implicit differential equation is given, the algorithm allows to find a subset, if it exists, where the solution curves are. The steps in the algorithm guarantee that the curves will not leave that final subset because of imposed tangency conditions. 

Let $S$ be an implicit differential equation on a manifold $P$, that  is, a submanifold $S$ of $TP$.
In such a case, it is possible to construct an algorithm to extract the integrable part of $S$ in $P$ (see ~\cite{1995MMT}).

A curve $\gamma\colon I \subseteq \mathbb{R}\rightarrow P$
is called a \textit{solution of the differential equation} $S$ if $\dot{\gamma}(I)\subset S$. The implicit differential equation $S$ is said to
be \textit{integrable at $v\in S$} if there
is a solution $\gamma\colon I \subseteq \mathbb{R}\rightarrow P$ such that
$\dot{\gamma}(0)=v$. The implicit differential equation $S$ is said to be integrable if it is integrable at each point
$v\in S$.

\begin{proposition}\cite[Proposition 5]{1995MMT} Let $\tau_P\colon TP \rightarrow P$ be the canonical tangent bundle projection.
If $N=\tau_P(S)$ is a submanifold of $P$ and if the mapping
\begin{align*}
 \tau_P\colon & S \rightarrow N,\\
&v  \mapsto \tau_P(v)
\end{align*}
is a surjective submersion, then the condition $S\subset TN$ is sufficient for integrability
of the implicit differential equation $S$.
\end{proposition}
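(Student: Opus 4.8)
The plan is to exploit the submersion hypothesis to produce, through any prescribed point $v\in S$, a smooth local vector field on $N$ whose values lie in $S$, and then to integrate it. First I would observe that, since $S\subset TN$ and $N$ is an embedded submanifold of $P$, the map $\tau_P|_S$ genuinely takes values in $N$, and a curve contained in $N$ that solves $S$ inside $TN$ also solves $S$ viewed inside $TP$ (the tangent vectors of a curve in $N$ are the same whether computed in $N$ or in $P$, because $TN$ sits inside $TP$). Hence it suffices to prove integrability ``inside $N$''.

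Fix $v\in S$ and set $n_0=\tau_P(v)\in N$. Because $\tau_P\colon S\to N$ is a surjective submersion, the local normal form for submersions (equivalently, the implicit function theorem) yields a smooth local section $\sigma\colon \mathcal{U}\to S$ on a neighbourhood $\mathcal{U}$ of $n_0$ in $N$ with $\sigma(n_0)=v$; here one uses that a submersion admits a local section through every point of its total space, not merely surjectivity. Since $S\subset TN$ and $\tau_P\circ\sigma=\mathrm{id}_{\mathcal U}$, the section $\sigma$ is precisely a vector field $X$ on $\mathcal{U}\subset N$ satisfying $X(n)\in S_n$ for all $n\in\mathcal U$ and $X(n_0)=v$.

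Next I would invoke the standard existence theorem for integral curves of a smooth vector field: let $\gamma\colon I\to \mathcal U$ be the maximal integral curve of $X$ with $\gamma(0)=n_0$. Then $\dot\gamma(t)=X(\gamma(t))\in S_{\gamma(t)}\subset S$ for every $t\in I$, so $\dot\gamma(I)\subset S$, i.e.\ $\gamma$ is a solution of the implicit differential equation $S$; moreover $\dot\gamma(0)=X(n_0)=v$. Thus $S$ is integrable at $v$, and since $v\in S$ was arbitrary, $S$ is integrable.

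The only delicate point is the production of a local section with a prescribed value at $n_0$: this is exactly where the hypothesis that $\tau_P|_S$ be a submersion is used, and it is also why the statement requires $N=\tau_P(S)$ to be a submanifold, so that the submersion normal form applies and the integral curve remains inside $N$. Everything else is routine: the passage from a section to a vector field and the existence of its integral curves. I therefore expect no genuine obstacle beyond carefully setting up the normal form, and the argument is essentially a one-line consequence once the right local picture is in place.
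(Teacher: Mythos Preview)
The paper does not supply a proof of this proposition: it is stated with a citation to \cite[Proposition~5]{1995MMT} and then used as a black box to justify the integrability algorithm. So there is nothing in the paper to compare against.

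Your argument is correct and is the standard one. The key step---using that a submersion admits a smooth local section through any prescribed point of its total space---is exactly where the hypothesis enters, and once the section $\sigma$ is in hand the condition $S\subset TN$ makes $\sigma$ a genuine vector field on an open subset of $N$ with values in $S$, after which the existence theorem for ODEs finishes the job. One small remark: you could tighten the exposition by noting that, since $S\subset TN$, the restriction $\tau_P|_S$ coincides with $\tau_N|_S$, so the section $\sigma$ is automatically a section of $\tau_N$ and hence a vector field on $\mathcal{U}$; this removes any ambiguity about ``inside $N$'' versus ``inside $P$''.
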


In order to obtain the integrable part of an implicit differential equation we construct the following sequence of objects
\begin{equation*}
 (S^0,N^0,\tau^0),\; (S^1,N^1,\tau^1), \; \dots \; , \; (S^k,N^k,\tau^k), \; \dots
\end{equation*}
where
\begin{alignat*}{3}
&S^0   = S,  &\qquad\qquad  & N^0   =\tau_P(S) = N, &\qquad\qquad  &\tau^0= \tau_P,\\
& S^1  = S\cap TN,  &\qquad\qquad  & N^1  = \tau_P(S^1), &\qquad\qquad  & \tau^1=  \tau_{P|_{S^1}},\\
& \dots  &\qquad\qquad & \dots  &\qquad\qquad   & \dots \\
&S^k   = S^{k-1}\cap TN^{k-1},  &\qquad\qquad  & N^k   =\tau_P(S^k), &\qquad\qquad  &\tau^k=\tau_{P|{S^k}}.
\end{alignat*}

For each $k$, it is assumed that the sets $N^k$ are submanifolds and that the mappings $\tau^k$
are surjective submersions. Since the dimension of $P$ is finite, the sequence of implicit differential
equations $S^0$, $S^1,\dots , S^k,\dots$ stabilizes at some index $k$, that is,
$S^k=S^{k+1}$. The integrable implicit differential equation $S^k\subset TP$ is the integrable
part of $S$. We remark that $S^k$ is possibly empty.

\paragraph{Adaptation of the integrability algorithm for Dirac systems.}
As described in Section~\ref{sec:Morse}, a Dirac system over a Morse family $E$ is an implicit dynamical system given by equation~\eqref{eq:GDiracsystem}. This system is defined on the Whitney sum $TN\oplus T^*N$. We will apply the integrability algorithm in the part corresponding to the tangent bundle by defining the following submanifold of $TN$:
\begin{equation*}S_{D_N,E}=\{v\in TN \, | \, \exists \, \alpha\in S_E \; \mbox{such that } (v,\alpha)\in D_N\}.
\label{eq:SDE}
\end{equation*}
A solution to the dynamical system $(D_N,S_E)$ is a curve $n(t)$ in $N$ such that $\dot{n}(t)$ lies in $S_{D_N,E}$. To obtain the integrable part of $S_{D_N,E}$ in $N$, we start by taking
$S^0=S_{D_N,E}$, $N^0=\tau_N(S^0)$ and $\tau^0=\tau_N$.

The following steps of the algorithm are defined by 
\begin{equation*}
S^k=T\tau_N(S^{k-1})\cap S^{k-1}, \quad N^k=\tau_N(S^k), \quad \tau^k=\tau_{N|_{S^k}}\, .
\end{equation*}
If the algorithm stabilizes, there exists a final submanifold (possibly empty) satisfying $S_{k_f}=T\tau_N(S_{k_f}) \cap S_{k_f}$. The steps of the algorithm generate a sequence of submanifolds in $N$ as follows
\begin{equation*}
N_{k_f} \stackrel{i_{k_f}}{\hooklongrightarrow} N_{k_f-1} \stackrel{i_{k_{f-1}}}{\hooklongrightarrow}\dots N_1 \stackrel{i_{1}}{\hooklongrightarrow} N_0  \stackrel{i_{0}}{\hooklongrightarrow} N.
\end{equation*}
As a consequence, for every $x$ in $N_{k_f}$ there exists $v$ in $T_xN_{k_f}$ such that $v$ is in $(S_{D_N,E})_x$. In this way we have found the base submanifold where the original dynamical system has solution. Thus a solution to the dynamical system $(D_N,S_E)$ is a curve $n(t)$ on $N_{k_f}$ such that $(\dot{n}(t),\mu_{n(t)})\in (D_N)_{n(t)}$, where $\mu_{n(t)}\in (S_E)_{n(t)}$.

By the properties of Dirac structures reviewed in Section~\ref{sec:Background}, and assuming some regularity condition\footnote{For instance, if $D_N \cap (\{0\}\oplus TN_{k_f}^0)$ has constant rank. This is the so-called ``clean intersection condition'' in~\cite{Burs}.}, the backward of $D_N$ by the map  $i_{0f}=i_0\circ i_1 \circ \dots \circ i_{k_f}\colon N_{k_f} \hookrightarrow N$ defines a Dirac structure $D_{N_{k_f}}=\B_{i_{0f}}(D_N)$  on $N_{k_f}$. This condition is not necessarily always true. Even though we can make some comments about the dynamics on $D_{N_{k_f}}$ if the Morse family is also pull-backed. A solution of $(D_N,S_E)$ is always a solution of $(D_{N_{k_f}},i_{0f}^*(S_E))$, but the converse does not hold in general, as is known for the case of singular Lagrangians studied in~\cite{Dirac,1979GoNe,1978GoNeHi}.

\remark A comprehensive study of the constraint algorithm for standard Dirac systems can be found in~\cite{CEF_DiracConstraints}.

\paragraph{A preservation result.} We will now prove that if the Dirac structure is integrable, any solution of the Dirac system in the final constraint manifold $N_{k_f}$ preserves the presymplectic structure along the corresponding leaf. This result generalizes the case in which the Lagrangian submanifold is given by the graph of an energy function.

The assumption of integrability on $D_N$ defines a presympectic foliation $\mathcal{F}={\rm pr}_1(D_N)$ of $N$ where ${\rm pr}_1\colon D_N \rightarrow TN$ is the projection onto the tangent bundle. If we apply the above integrability algorithm, we obtain the final submanifold $N_{k_f}\subset N$ (we assume it is non empty). It is possible to define a presymplectic foliation $\mathcal{F}\cap T
 N_{k_f}$ of $N_{k_f}$. The leaf of the foliation $\mathcal{F}\cap TN_{k_f}$ through a point $z\in N_{k_f}$ will be denoted by $\mathcal{H}^f_z$, and the corresponding presymplectic structure on $\mathcal{H}^f_z$ will be denoted by $\omega^f_z$. Note that, by definition of $N_{k_f}$, for any $X\in\mathfrak{X}(\mathcal{H}^f_z)$ there exists a 1-form $\mu$ in $S_E$ defined along $\mathcal{H}^f_z$  such that 
\[
(X,\mu)\in D_{N_y} \quad \mbox{for any } y\in \mathcal{H}^f_z\, .
\]
Consequently, $i_X\omega^f_z=\mu$ on $\mathcal{H}^f_z$ pointwise. We are now ready to prove the following result:
\begin{proposition}
 Let $X$ be a vector field on $\mathcal{H}^f_z$ such that there exists $\mu$ in $S_E$ defined along $\mathcal{H}^f_z$ satisfying $i_X\omega^f_z=\mu$. Then $\pounds_X\omega^f_z=0$. In particular, the flow of $X$ preserves $\omega^f_z$. 
\end{proposition}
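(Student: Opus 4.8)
The plan is to combine Cartan's formula $\pounds_X\omega^f_z = d\,i_X\omega^f_z + i_X\,d\omega^f_z$ with the Lagrangian character of $S_E$ (Proposition~\ref{prop:img:j:lagrangian}). First I would dispose of the second term: since $D_N$ is assumed integrable, the leaves of the foliation $\mathcal{F} = {\rm pr}_1(D_N)$ carry \emph{closed} presymplectic forms, and restricting this structure to $N_{k_f}$ preserves closedness, so $d\omega^f_z = 0$ on $\mathcal{H}^f_z$. Hence $\pounds_X\omega^f_z = d\,i_X\omega^f_z$, and, using the hypothesis $i_X\omega^f_z = \mu$ read along $\mathcal{H}^f_z$, the whole statement reduces to showing that the $1$-form $\mu$ is closed on the leaf.

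For this key step I would exploit that $\mu$ takes values in the Lagrangian submanifold $S_E\subset T^*N$. View $\mu$ as a smooth map $\hat\mu\colon \mathcal{H}^f_z \to T^*N$ lying over the inclusion $j\colon \mathcal{H}^f_z\hookrightarrow N$, i.e. $\pi_N\circ\hat\mu = j$, whose image is contained in $S_E$. The tautological property of the Liouville form $\theta_N$ gives $\hat\mu^*\theta_N = j^*\mu$. On the other hand, $\hat\mu$ factors as $\hat\mu = \iota_{S_E}\circ\bar\mu$ through the inclusion $\iota_{S_E}\colon S_E\hookrightarrow T^*N$, and since $S_E$ is Lagrangian (in particular isotropic) one has $\iota_{S_E}^*\omega_N = 0$, whence $\hat\mu^*\omega_N = \bar\mu^*\iota_{S_E}^*\omega_N = 0$. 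As $\omega_N = -d\theta_N$, this gives $d(j^*\mu) = -\hat\mu^*\omega_N = 0$, so $\mu$ is closed along $\mathcal{H}^f_z$. Combining the two paragraphs, $\pounds_X\omega^f_z = d(j^*\mu) = 0$, and, the Lie derivative being zero, the flow of $X$ preserves $\omega^f_z$.

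The main obstacle I expect is the careful interpretation of the identity $i_X\omega^f_z = \mu$: the left-hand side is a $1$-form on the leaf $\mathcal{H}^f_z$, whereas $\mu$ is a section of $S_E\subset T^*N$ along $\mathcal{H}^f_z$, so the equality has to be understood after pulling $\mu$ back by $j$, and one should check that it follows from $(X,\mu)\in (D_{N_{k_f}})_y$ together with the description of $\omega^f_z$ as the presymplectic form of the leaf of $\mathcal{F}\cap TN_{k_f}$. A minor technical point is that $S_E$ is in general only an immersed Lagrangian submanifold, possibly with multiple points, so the factorization $\hat\mu = \iota_{S_E}\circ\bar\mu$ with $\bar\mu$ smooth needs a word of justification; but the whole argument is local near each point of $\mathcal{H}^f_z$, where $S_E$ is embedded, so this is not a serious difficulty.
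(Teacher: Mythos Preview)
Your proposal is correct and follows essentially the same route as the paper: Cartan's formula reduces the claim to $d\mu=0$ on the leaf, and this is deduced from the fact that $S_E$ is Lagrangian in $(T^*N,\omega_N)$ via the tautological property of $\theta_N$. The only notable difference is in packaging: you pull back $\omega_N$ along the map $\hat\mu\colon \mathcal{H}^f_z\to S_E\subset T^*N$ and use $\hat\mu^*\theta_N=j^*\mu$ directly, whereas the paper computes $d\theta_N(Y_1,Y_2)$ pointwise for $Y_1,Y_2\in\mathfrak{X}(S_E)$ and then must argue separately that every vector field on $\mathcal{H}^f_z$ arises as $T\pi_N$ of some vector field on $S_E$. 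Your formulation sidesteps that last step entirely, which is a genuine simplification; the paper's extra argument (invoking the integrability algorithm to lift vector fields) is unnecessary once one works with the pullback by $\hat\mu$.
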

\begin{proof}
 Using Cartan's formula and the definition of $D_{N_z}$ we have $\pounds_X\omega^f_z=d(i_X\omega^f_z)=d\mu$. As  $S_E$ is a Lagrangian submanifold of $T^*N$, for any $Y_1,Y_2\in \mathfrak{X}(S_E)$  we have that $0=\omega_N(Y_1,Y_2)=-d\theta_N(Y_1,Y_2)$, where $\theta_N$ is the canonical 1-form on $T^*N$. At $\nu\in T^*N$ we have
 \begin{align*}
  0=-d\theta_N(Y_1,Y_2)\mid_\nu&=\left(-Y_1[\theta_N(Y_2)]+Y_2[\theta_N(Y_1)]+\theta_N([Y_1,Y_2])\right)\mid_\nu \\
  &=-Y_1[\nu(T\pi_N(Y_2))]+Y_2[\nu(T\pi_N(Y_1))]+\nu([T\pi_N(Y_1),T\pi_N(Y_2)]).
 \end{align*}
In particular, if $\nu$ is understood as a 1-form on $N$, the previous computation for $Y_1,Y_2\in \mathfrak{X}(S_E)$ can be rewritten as follows:
\[
 0=-d\theta_N(Y_1,Y_2)\mid_\nu=-(d\pi_N^*\nu)(Y_1,Y_2)=-(\pi_N^*d\nu)(Y_1,Y_2)=-d\nu(T\pi_N(Y_1),T\pi_N(Y_2)).
\]
But this implies that for any $Z_1,Z_2\in\mathfrak{X}\in(\mathcal{H}^f_z)$ obtained from the projections of vector fields on $S_E$ we have  

\[
 (\pounds_X\omega^f_z)(Z_1,Z_2)=d\mu(Z_1,Z_2)=0.
\]
It remains to prove that any vector field $Z$ on $\mathcal{H}^f_z$ can be expressed as a projection of  a vector field on $S_E$. For any integral curve $n(t)$ of $Z$ in the leaf $\mathcal{H}^f_z$, the integrability algorithm guarantees that there exists $\mu_{n(t)}$ in $(S_E)_{n(t)}$ such that $(\dot{n}(t),\mu_{n(t)})\in (D_N)_{n(t)}$. As $\mu_{n(t)}=(i_{\dot{n}(t)}\omega^f_z)_{n(t)}$, the differentiability of $\mu$ guarantees that the result holds. Note that, as needed,
\begin{equation*}
T\pi_N\left(\dfrac{{\rm d}}{{\rm d}t} \mu_{n(t)}\right)=\dot{n}(t)=Z(n(t))\, .
\end{equation*}
\end{proof}

\section{Future work}\label{sec:future}

The unified formalism developed in this paper includes and extends recent results in the literature such as~\cite{2015JiYo} where nonholonomic and vakonomic mechanics are described using Dirac structures. As the authors prove there, nonholonomic mechanics cannot be described by Lagrangian submanifolds. However, a Morse family also defines a Lagrangian submanifold that can be used to provide the Dirac structures with dynamics. This approach makes possible to describe nonholonomic mechanics as described in Section~\ref{subsec:almost}, as well as many other examples explained in Section~\ref{sec:Examples}.

The future research lines include:

\begin{enumerate}
\item The interconnection of simpler systems allows to describe complex systems  as the theory of port-Hamiltonian systems shows~\cite{VdS-Book,VdSM2}. Dirac structures have already been used in this context~\cite{CerVdSBan,CerVdSBan2} (we also refer to \cite{2018_Interconnection}  for a recent geometric approach), but our formalism provides an intrinsic description to tackle the interconnection of Dirac systems by means of Lagrangian submanifolds and Morse families. 

 \item To solve the generalized Dirac systems is usually challenging. This paper reveals a unified geometric approach that opens the path to define geometric integrators for those Dirac systems~\cite{HLW_book,LeOh,MW_Acta,MR3707341}.

\end{enumerate} 

\appendix

\section{Natural maps and commutative diagrams}\label{ap:A}

There are some interesting relations between the Dirac structure $D_{\omega_Q}$ on $T^*Q$, given by the graph of $\omega_Q$, and the spaces $TT^*Q$, $T^*T^*Q$ and $T^*TQ$ appearing in the Tulczyjew triple. Let us first recall that there are canonically defined isomorphisms whose local expressions are
\begin{alignat*}{3}
\flat_{\omega_Q}\colon TT^*Q&\to T^*T^*Q\,,\qquad (q^i,p_i,\dot q^i,\dot p_i)&&\mapsto (q^i,p_i,-\dot p_i,\dot q^i),\\
\alpha_Q\colon TT^*Q&\to T^*TQ\,,\qquad (q^i,p_i,\dot q^i,\dot p_i)&&\mapsto (q^i,\dot q^i,\dot p_i,p_i).
\end{alignat*}
The intrinsic definition of these maps, and the motivation behind them, can be found in~\cite{Tu,TuHamilton}. With these maps in mind, and noting that $D_{\omega_Q}$ reads
\begin{equation}
D_{\omega_Q} =\left\{(q^i,p_i,\dot q^i,\dot p_i,\alpha_i,\beta^i)\st \dot p_i +\alpha_i=0,\, \dot q^i-\beta^i=0\right\},
\end{equation}
we will define diffeomorphisms $\Psi_1$, $\Psi_2$ and $\Psi_3$ from $D_{\omega_Q}$ to the spaces $TT^*Q$, $T^*TQ$ and $T^*T^*Q$, respectively. Namely, if we denote by $\hbox{pr}_1$ and $\hbox{pr}_2$ the projections of $TT^*Q\oplus T^*T^*Q$ onto $TT^*Q$ and $T^*T^*Q$, and by $i_{D_{\omega_Q}}: D_{\omega_Q}\hookrightarrow TT^*Q\oplus T^*T^*Q$. Then we set:
\begin{alignat*}{2}
&\Psi_1\colon D_{\omega_Q}\subset TT^*Q\oplus T^*T^*Q \to TT^*Q \,, \qquad &&\Psi_1  =\hbox{pr}_1\circ i_{D_{\omega_Q}},\\
&\Psi_2\colon D_{\omega_Q}\subset TT^*Q\oplus T^*T^*Q \to T^*T^*Q \,, \qquad &&\Psi_2 =\hbox{pr}_2\circ i_{D_{\omega_Q}},\\
&\Psi_3\colon D_{\omega_Q}\subset TT^*Q\oplus T^*T^*Q \to T^*T^*Q \,, \qquad && \Psi_3 = \alpha_Q\circ\Psi_1.
\end{alignat*}

\noindent These maps and their coordinate expressions are shown in Diagram~\ref{dia:Psi} (where $\sharp_{\omega_Q}$ is the inverse of $\flat_{\omega_Q}$).

\begin{figure}[ht!]
\centering
\includegraphics{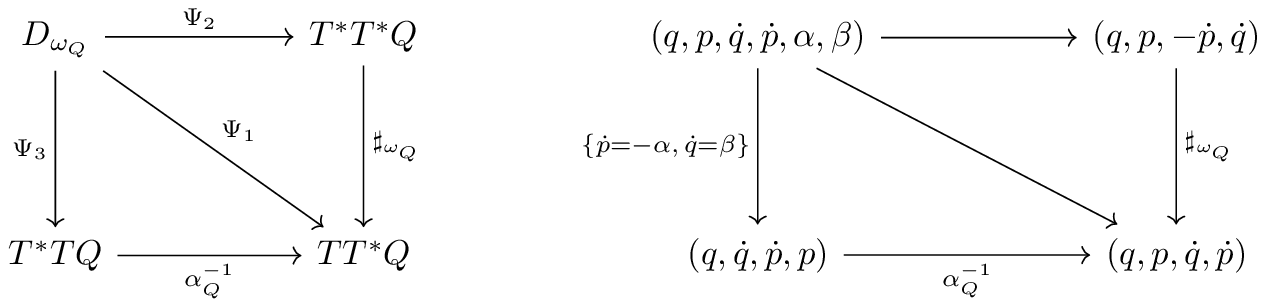}
\caption{The maps $\Psi_1$, $\Psi_2$ and $\Psi_3$.}\label{dia:Psi}
\end{figure}

One might also relate $D_{\omega_Q}$ to the Dirac structure $D_M$ on $M=TQ\oplus T^*Q$, given by the graph of the pullback of $\omega_Q$ to $M$. In coordinates
\begin{equation}\label{eq:D_M}
D_{M} =\left\{(q^i,v^i,p_i,\dot q^i,\dot v^i,\dot p_i,\alpha_i,\gamma_i,\beta^i)\st \dot p_i +\alpha_i=0,\, \gamma_i=0,\, \dot q^i-\beta^i=0\right\}.
\end{equation}
There are natural maps
\begin{figure}[H]
\centering
\includegraphics{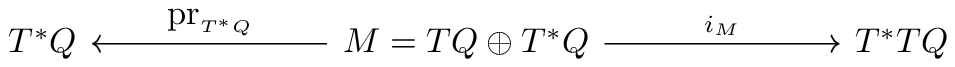}
\end{figure}

\noindent where $\mbox{pr}_{T^*Q}$ is the projection onto $T^*Q$, and $i_M\colon M\to  T^*TQ$ is the inclusion (it is a vector bundle inclusion, see~\cite{CEF_DiracConstraints} for a definition). In coordinates, $i_M(q,v,p)=(q,v,p,0)$. The basic observation is that the spaces $T^*Q $ and $T^*TQ$ both have canonical Dirac structures $D_{\omega_Q}$ and $D_{\omega_{TQ}}$, and that $D_M$ can be obtained via the backward of these structures by either the projection or the inclusion, i.e. $D_M=\B_{\mbox{pr}_{T^*Q}}(D_{\omega_Q})=\B_{i_M}(D_{\omega_{TQ}})$. We summarize the situation in an enlarged diagram (Diagram~\ref{dia:D_M}).
\begin{figure}[ht!]
\centering
\includegraphics{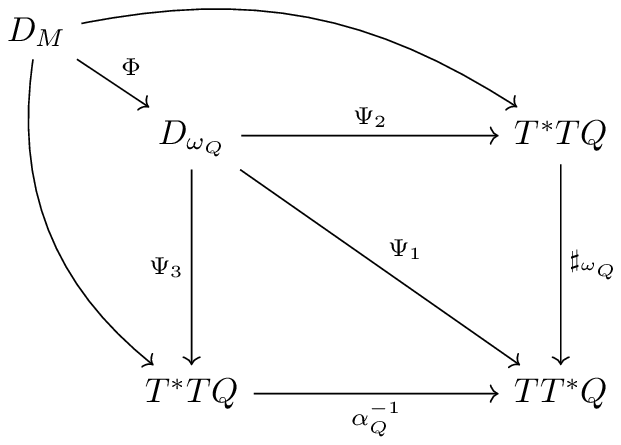}
\caption{$D_M$ and $D_{\omega_Q}$.}\label{dia:D_M}
\end{figure}

\noindent The map $\Phi\colon D_M\to D_{\omega_Q}$ is a vector bundle morphism over the projection $\mbox{pr}_{T^*Q}$, in coordinates
\begin{equation}\label{eq:Phi}
\Phi(q^i,v^i,p_i,\dot q^i,\dot v^i,\dot p_i,\alpha_i,\gamma_i,\beta^i)=(q^i,p_i,\dot q^i,\dot p_i,\alpha_i,\beta^i).
\end{equation}
A coordinate-free definition is $ \Phi=T{\mbox{pr}_{T^*Q}}\oplus T^*i_{T^*Q}$, where $i_{T^*Q}\colon T^*Q\to M=TQ\oplus T^*Q$ is the inclusion $i_{T^*Q}(q,p)=(q,0,p)$, and $T^*i_{T^*Q}$ is the cotangent map of $i_{T^*Q}$.

In the case of a linear almost Poisson bracket discussed in Section~\ref{subsec:almost}, there is a similar commutative diagram for $D_{\A^*}$ (compare with Diagram~\ref{dia:Psi}):

\begin{figure}[H]
\centering
\includegraphics{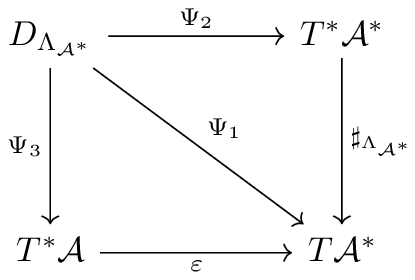}
\end{figure}

\noindent To define the maps involved, let us first recall the existence of a canonical isomorphism ${\mathcal{R}}\colon T^*\A^*\to T^*\A$, in coordinates
\[
{\mathcal{R}}(q^i,p_A,\alpha_i,\beta^A)=(q^i,\beta^A,-\alpha_i,p_A).
\]
The map $\varepsilon$ in the diagram is such that the following diagram is commutative
\begin{figure}[H]
\centering
\includegraphics{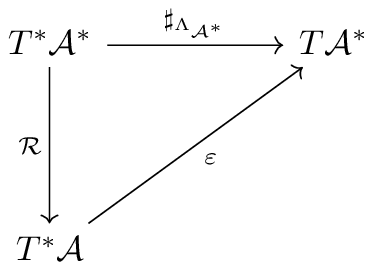}
\end{figure}

\noindent (in particular, it depends on the Poisson structure chosen). In coordinates, we find
\[
\varepsilon(q^i,v^A,\alpha_i,\gamma_A)=(q^i,\gamma_A,\rho_A^i v^A,C^{D}_{BA} v^{B}\gamma_{D} -  \rho^i_{A}\alpha_i).
\]
We refer the reader to~\cite{GrGrUr,GrUr_Algebroids} for more details. We can now define the maps $\Psi_1$, $\Psi_2$ and $\Psi_3$. Denote by $\hbox{pr}_1$ and $\hbox{pr}_2$ the projections of $T\A^*\oplus T^*\A^*$ onto $T^*\A^*$ and $T\A^*$, and $i_{D_{\Lambda_{\A^*}}}\colon D_{\Lambda_{\A^*}}\to T\A^*\oplus T^*\A^*$ the inclusion. Then:
\begin{alignat*}{2}
&\Psi_1\colon D_{\Lambda_{\A^*}} \to T\A^* \,, \qquad &&\Psi_1  =\hbox{pr}_1\circ i_{D_{\Lambda_{\A^*}}},\\
&\Psi_2\colon D_{\Lambda_{\A^*}} \to T^*\A^* \,, \qquad &&\Psi_2 =\hbox{pr}_2\circ i_{D_{\Lambda_{\A^*}}},\\
&\Psi_3\colon D_{\Lambda_{\A^*}} \to T^*\A \,, \qquad && \Psi_3 = {\mathcal{R}}\circ \hbox{pr}_2\circ i_{D_{\Lambda_{\A^*}}}={\mathcal{R}}\circ \Psi_2.
\end{alignat*}
The coordinate expressions are summarized in the following diagram:

\begin{figure}[H]
\centering
\includegraphics{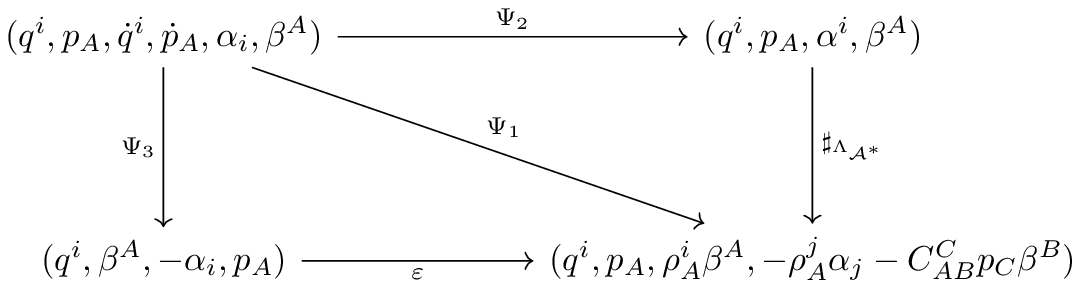}
\end{figure}

We also have an enlarged commutative diagram generalizing Diagram~\ref{dia:D_M}:

\begin{figure}[H]
\centering
\includegraphics{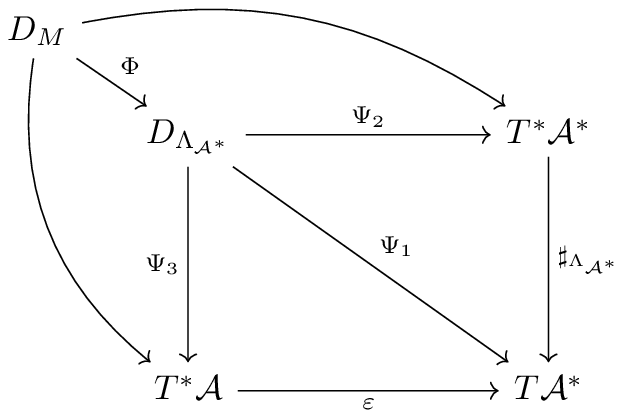}
\end{figure}

\noindent where $D_M=\B_{\pi_{(M,\A^*)}}\left(D_{\Lambda_{\A^*}}\right)$ and the map $\Phi$ is defined analogously to~\eqref{eq:Phi}.

\bibliographystyle{alpha}
\bibliography{References}

\end{document}